\documentclass[11pt]{article}

\usepackage{amsmath,amssymb,amsthm,amsfonts}
\usepackage{graphicx}
\usepackage[numbers,sort&compress]{natbib}
\usepackage{setspace}
\usepackage{array}
\usepackage{booktabs}
\usepackage[dvipsnames]{xcolor}
\usepackage{hyperref}

\definecolor{citeblue}{RGB}{0,71,171}
\definecolor{linknavy}{RGB}{0,51,128}
\hypersetup{
    colorlinks=true,
    citecolor=citeblue,
    linkcolor=linknavy,
    urlcolor=linknavy,
    breaklinks=true
}

\renewcommand{\arraystretch}{1.3}

\theoremstyle{plain}
\newtheorem{theorem}{Theorem}[section]
\newtheorem{proposition}[theorem]{Proposition}
\newtheorem{lemma}[theorem]{Lemma}
\newtheorem{corollary}[theorem]{Corollary}

\theoremstyle{definition}

\newtheorem{assumption}[theorem]{Assumption}

\DeclareMathOperator*{\Var}{Var}

\newcommand{\E}{\mathbb{E}}
\newcommand{\R}{\mathbb{R}}
\newcommand{\Prob}{\mathbb{P}}
\newcommand{\N}{\mathcal{N}}
\newcommand{\F}{\mathcal{F}}
\newcommand{\A}{\mathcal{A}}
\newcommand{\Lg}{\mathcal{L}}
\newcommand{\Ham}{\mathcal{H}}
\newcommand{\KL}{\mathrm{KL}}
\newcommand{\Wh}{\widehat{W}}
\newcommand{\eps}{\varepsilon}
\newcommand{\half}{\tfrac{1}{2}}

\renewcommand{\qedsymbol}{$\square$}

\makeatletter
\renewenvironment{proof}[1][\proofname]{\par
  \normalfont \topsep6\p@\@plus6\p@\relax
  \trivlist
  \item[\hskip\labelsep\itshape #1\@addpunct{.}]\ignorespaces
}{%
  \par\nobreak\vskip3\p@\centerline{\qedsymbol}\endtrivlist\@endpefalse
}
\makeatother

\title{\vspace{-2cm}Path-Space Robust Bayesian Portfolio Selection}

\author{Andy Au\footnote{Department of Industrial Engineering and Operations Research, Columbia University, New York, NY 10027, USA. Email: ka3173@columbia.edu.}}

\date{\vspace{-5ex}}

\begin{document}
\maketitle
\begin{center}
\today
\end{center}

\bigskip

% =============================
% Abstract
% =============================

\begin{abstract}
A Bayesian investor learns an unknown asset drift by Kalman--Bucy filtering and trades the mean--variance optimal portfolio, but his observation model may be wrong. We make the policy robust to an adversary who distorts the law of observed prices, paying for it in relative entropy. Because wealth and beliefs are driven by the same Brownian motion, one distortion corrupts trading profits and the filter together. The robust policy and its price are then closed form. To leading order, the \emph{price of robustness} is half the variance of the loss the non-robust investor would suffer. The policy pulls back from large positions by a cubic correction. With a known drift the non-robust policy is infinitely costly; under learning the loss is bounded and the cost finite. The new structure, though, comes from how the robustness penalty is scaled rather than from learning: value-scaling preserves the affine policy exactly.

\end{abstract}
\newpage

% =============================
% Section 1: Introduction
% =============================
\section{Introduction}
\label{sec:intro}

In portfolio optimization, expected returns are the parameter that is hard to estimate. Volatility can be estimated from high-frequency data with precision, but the drift of a risky asset is irreducible over any realistic horizon. Mean--variance allocation \citep{markowitz1952} sizes the position from the estimated drift, so it inherits that estimate's uncertainty. 

One response is to stop pretending the drift is known and \emph{learn} it instead. \citet{defranco2018} carry this out in continuous time. The unknown drift is given a Gaussian prior, returns are observed, and the posterior is given by the Kalman--Bucy filter. The mean--variance problem reduces to a dynamic program in belief space with a closed-form feedback solution.

But learning does not remove the fragility; it relocates it. The filter's output is only as trustworthy as the observation model that generates it. When the posterior reports a favorable drift, the optimal policy takes a large position, and if the model is even slightly misspecified, then the investor is largely exposed. His real exposure is \emph{model} uncertainty.

We ask what the optimal policy becomes when the investor acknowledges that the observation model may be wrong, without committing to how it is wrong. We adopt the Hansen--Sargent multiplier formulation \citep{hansensargent2001,hansensargent2008}. An adversary, \emph{nature}, acts on \emph{path-space} and is allowed to distort the law of the observations. She pays for the distortion by its relative entropy against the reference model. Then the investor optimizes against nature's worst case. 

One feature of the De~Franco model distinguishes this problem from generic robust portfolio choice: Wealth and the filter are driven by the same Brownian motion, the \emph{innovation}, which measures the normalized surprise in observed returns. When the adversary distorts its drift, that single distortion corrupts both the investor's learning and his realized wealth. We call this the \emph{dual-channel structure}.

\subsection*{What we find}
\label{sec:intro-findings}

The robust problem has an Isaacs equation, and an exponential (Cole--Hopf) change of variables reduces it to the De~Franco equation with an \emph{exponential} terminal condition instead of a quadratic one. The robustness is absorbed entirely into the terminal condition. The general identity behind this step is that relative-entropy robustness is dual to a risk-sensitive exponential criterion. This duality is classical \citep{ahs2003}; the contribution is the effect of the exponential terminal condition on De~Franco's structure.

\textbf{Three findings organize the results.}

\textbf{First}, the problem does not retain De~Franco's closed form. Once we optimize over both nature and the investor, the dynamic-programming equation contains a rational function of the tracking error (his distance from the wealth target) that no quadratic value can satisfy. The failure remains when the drift is known, because it comes from the robustness normalization rather than from learning.

\textbf{Second}, the leading correction is the \emph{price of robustness}, what the investor pays to insure against a wrong model. Along the non-robust optimal policy, the De~Franco tracking error is a deterministic, bounded function of the terminal posterior mean. The first-order robustness premium is
\begin{equation}
\label{eq:intro-variance}
    V_1 = \half\,\Var_t\!\big[(X_T^\star - w)^2\big],
\end{equation}
half the variance of the realized loss under the non-robust optimum. A quadratic loss yields a quartic penalty in tracking error. Its coefficient is closed form and equals the normalized fourth moment of the De~Franco tracking error.

\textbf{Third}, learning removes the fragility rather than compounding it. With a known drift, the non-robust policy is fragile. Its realized loss has lognormal tails, and its robust cost is infinite at all levels of ambiguity. The policy that ignores misspecification cannot survive an infinitesimal adversary. Under learning, the shared-innovation coupling makes the tracking error pathwise \emph{bounded}, the robust cost is finite for all ambiguity levels, and a worst-case measure exists. Here estimation risk does not amplify model ambiguity but regularizes it.

\subsection*{What this means for the policy, and a dichotomy}
\label{sec:intro-policy}

The non-robust policy sizes the position proportional to the tracking error. The robust policy adds a cubic correction that always opposes it, negligible for small bets but sharp for large ones, so the investor backs off the large positions a possibly wrong posterior would otherwise commit him to. With a known drift the retreat is forced---a matter of survival, by the fragility above---while under learning it is a matter of optimization (Sections~\ref{sec:policy} and~\ref{sec:fragility}).

This shape change is a property of the \emph{normalization} (how the robustness penalty is scaled), not of the problem itself. We use the constant Hansen--Sargent multiplier, which prices distortions by a fixed multiple of relative entropy. The alternative, Maenhout's value-function scaling \citep{maenhout2004}, is a device for preserving the homotheticity of CRRA utility, which mean--variance does not have. We solve both: the new structure of this paper comes from the constant multiplier, while value-scaling preserves the De~Franco affine policy.

\subsection*{Contributions and outline}
\label{sec:intro-contributions}

The general connection between relative-entropy robustness, exponential transforms, and risk-sensitive control is classical; the contribution is the specific Bayesian Markowitz implementation. Two further contributions concern the formulation: the path-space KL game whose adversary distorts the innovation, with a static-duality argument justifying the mean--variance interpretation under the worst-case measure (Section~\ref{sec:setup}); and the dual-channel structure, in which one distortion attacks both wealth and the reference filter, with the filter committed rather than re-optimized (Section~\ref{sec:robust}).

We develop the findings in order: the Cole--Hopf reduction and rational obstruction (Section~\ref{sec:isaacs}); the pathwise identity and first-order premium (Section~\ref{sec:expansion}); the closed-form quartic coefficient, its unconditional non-explosion, and the $\sqrt{T}$ slowing (Section~\ref{sec:coefficient}); and the cubic policy correction and normalization dichotomy (Section~\ref{sec:policy}). Section~\ref{sec:verify} states a conditional verification theorem and isolates the single structural assumption on which every conditional statement rests; the two-sided expansion remainder is the separate item left open. Sections~\ref{sec:limits}--\ref{sec:conclusion} give limiting cases, numerics, and conclusions.

% =============================
% Section 2: Model
% =============================
\section{Model}
\label{sec:setup}

We have one risky asset, an unknown drift, and an investor who learns it by filtering. Two features matter throughout: terminal wealth is a functional of the observation path alone, and wealth and beliefs are driven by the same innovation. Section~\ref{sec:embedding} embeds mean--variance as the tracking problem we then make robust.

\subsection{Market and prior}
\label{sec:market}

A risk-free asset earns rate $r$; a risky asset has price $S_t$ with
\begin{equation}
\label{eq:price}
    \frac{dS_t}{S_t} = \mu\,dt + \sigma\,dW_t,
\end{equation}
where $\sigma > 0$ is known and the excess drift is unknown. We parameterize it by the Sharpe ratio $\rho := \frac{\mu - r}{\sigma}$.

\begin{assumption}
\label{assump:prior}
The Sharpe ratio $\rho$ is independent of $W$ with Gaussian prior $\rho \sim \N(m_0, P_0)$.
\end{assumption}

Volatility is taken as known because it is, to good approximation, observable, whereas no amount of continuous observation pins down $\rho$. We return to this drift-not-volatility boundary in Section~\ref{sec:scope}.

\subsection{Filtering and the innovation}
\label{sec:filter}

The investor observes prices, not $\rho$. Write the whitened cumulative excess return as
\begin{equation}
\label{eq:Ydef}
    dY_t = \rho\,dt + dW_t, \qquad Y_t = \sigma^{-1}\Big(\log\frac{S_t}{S_0} + \big(\tfrac{\sigma^2}{2} - r\big)t\Big).
\end{equation}
Since $\sigma$ and $r$ are known, the price path and the $Y$-path are in pathwise bijection, so they generate the same filtration: the observation filtration $\F^Y$, with $\F_t := \sigma(Y_s : s \le t)$ the information the investor holds at time $t$. The posterior is then Gaussian, with moments $m_t := \E[\rho \mid \F_t]$ and $P_t := \Var(\rho \mid \F_t)$.

\begin{proposition}[Posterior dynamics]
\label{prop:filter}
The posterior is $\N(m_t, P_t)$ with
\begin{equation}
\label{eq:filter}
    dm_t = P_t\,d\Wh_t, \qquad dP_t = -P_t^2\,dt, \qquad P_t = \frac{P_0}{1 + P_0 t},
\end{equation}
where the innovation $d\Wh_t := dY_t - m_t\,dt$ is an $\F_t$-Brownian motion.
\end{proposition}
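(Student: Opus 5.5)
The plan is to compute the posterior explicitly by Gaussian conjugacy, read off its moments, and then obtain the stochastic differential of the mean by Itô's formula. The innovation property is proved separately.

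\textbf{Step 1: the posterior in closed form.} Since $Y$ has dynamics $dY_t = \rho\,dt + dW_t$ with $\rho$ independent of $W$ (Assumption~\ref{assump:prior}), Girsanov's theorem (or a Cameron--Martin argument) gives, conditional on $\rho$, the likelihood of the path $(Y_s)_{s\le t}$ relative to Wiener measure:
\[
\exp\!\big(\rho Y_t - \tfrac12\rho^2 t\big).
\]
This depends on the path only through $Y_t$. Multiplying by the prior density $\propto \exp\big(-(\rho-m_0)^2/(2P_0)\big)$ and completing the square in $\rho$ shows that the posterior is Gaussian, with
\[
P_t = \Big(\frac{1}{P_0} + t\Big)^{-1} = \frac{P_0}{1+P_0 t}, \qquad m_t = P_t\Big(\frac{m_0}{P_0} + Y_t\Big).
\]
From this, $P_t$ is deterministic and $dP_t/dt = -P_t^2$ by direct differentiation. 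To make this rigorous, I would justify the Bayes formula through the Kallianpur--Striebel representation; no integrability issues arise because the prior is Gaussian.

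\textbf{Step 2: dynamics of $m_t$.} Apply Itô's formula to the product $m_t = P_t\,(m_0/P_0 + Y_t)$. Because $P_t$ has finite variation, there is no cross-variation term, and
\[
dm_t = \dot P_t\big(m_0/P_0 + Y_t\big)\,dt + P_t\,dY_t = -P_t\, m_t\,dt + P_t\,dY_t = P_t\,(dY_t - m_t\,dt) = P_t\,d\Wh_t.
\]

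\textbf{Step 3: $\Wh$ is an $\F_t$-Brownian motion.} This is the main point requiring care. I would write
\[
\Wh_t = W_t + \int_0^t (\rho - m_s)\,ds.
\]
$\Wh$ is continuous and $\F^Y$-adapted, since $m_s$ is a functional of $Y$. Its quadratic variation is $t$, because it differs from $W$ by a finite-variation process. It remains to show that $\Wh$ is an $\F^Y$-martingale; Lévy's characterization then finishes the proof.

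For $s<t$, condition on $\F_s$:
\[
\E\big[W_t - W_s \mid \F_s\big] = 0,
\]
because $W_t - W_s$ is independent of $\sigma(\rho, W_u : u\le s) \supseteq \F_s$. For the drift term, the tower property gives
\[
\E\Big[\int_s^t (\rho - m_u)\,du \,\Big|\, \F_s\Big] = \int_s^t \E\big[\E[\rho - m_u \mid \F_u] \mid \F_s\big]\,du = 0,
\]
where Fubini is justified by $\E|\rho| < \infty$ and $\E|m_u| \le \E|\rho|$. Integrability of $\Wh_t$ follows from the same bounds.

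The one technical subtlety I expect is the statement that the observation filtration satisfies the usual conditions. I would handle it by working with the augmented filtration; this changes neither the conditional expectations nor the martingale property.
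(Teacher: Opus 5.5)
Your argument is correct. It is also more self-contained than the paper's, which gives no derivation at all: the paper cites standard Kalman--Bucy filtering with a static state (Liptser--Shiryaev) for the filter equations and the Fujisaki--Kallianpur--Kunita theorem for the Brownian property of $\Wh$, and records $m_t = P_t(m_0/P_0 + Y_t)$ only as an aside.

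You reverse the order and make conjugacy the engine. With a constant drift, the Cameron--Martin likelihood $\exp(\rho Y_t - \tfrac12\rho^2 t)$ depends on the path only through $Y_t$. So the Gaussian posterior and the closed-form solution of $\dot P_t = -P_t^2$ come straight from completing the square. The equation $dm_t = P_t\,d\Wh_t$ then becomes a one-line It\^o computation rather than an instance of the general filter equation.

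Your Step~3 is exactly the elementary core of the innovation part of FKK, specialized to this model. It needs only $\E|\rho| < \infty$, independence of $\rho$ and $W$, the tower property, and L\'evy's characterization.

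The two routes buy different things. The citation route buys brevity and generality: it would survive a non-static drift, say of Ornstein--Uhlenbeck type, where the likelihood no longer collapses to a function of $Y_t$. Your route buys a first-principles proof that also displays $m$ as a deterministic affine functional of $Y$. That is precisely the $\F^Y$-adaptedness behind the inclusion $\F^{\Wh} \subseteq \F^Y$ in Lemma~\ref{lem:filtration}.

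On the usual conditions, your Step~3 goes through verbatim for the merely null-augmented filtration of $Y$. Once $\Wh$ is Brownian there, right-continuity comes for free. By the argument of Lemma~\ref{lem:filtration}, that filtration coincides with the null-augmented filtration generated by $\Wh$, and the augmented filtration of a Brownian motion is right-continuous.
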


This is standard Kalman--Bucy filtering with a static state (see \citet{liptsershiryaev2001}); the innovation is Brownian by the Fujisaki--Kallianpur--Kunita theorem. Explicitly, $m_t = P_t\big(\frac{m_0}{P_0} + Y_t\big)$ by Gaussian conjugacy. 

The innovation $\Wh$ generates the observation filtration; the proof is immediate because the filter gain is deterministic. This makes $\F^Y$ a Brownian filtration, which later lets us identify the class of allowed model distortions.

\begin{lemma}[Innovation filtration]
\label{lem:filtration}
$\F^{\Wh} = \F^Y$ (equality of the usual augmentations; we work under the usual conditions throughout). Since $P$ is deterministic, $m_t = m_0 + \int_0^t P_s\,d\Wh_s$ is $\F^{\Wh}$-adapted. So $Y_t = \int_0^t m_s\,ds + \Wh_t$ is $\F^{\Wh}$-adapted, giving $\F^Y \subseteq \F^{\Wh}$. The reverse inclusion holds by construction of $\Wh$.
\end{lemma}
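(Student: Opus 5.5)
We prove the two inclusions $\F^Y \subseteq \F^{\Wh}$ and $\F^{\Wh} \subseteq \F^Y$ separately and then pass to the usual augmentations. The key fact is that the gain $P_t = P_0/(1+P_0 t)$ of Proposition~\ref{prop:filter} is deterministic. Because of this, the filter equation $dm_t = P_t\,d\Wh_t$ is a linear equation with a nonrandom coefficient. Its solution is therefore an explicit functional of the innovation path alone, and no feedback through $m$ is needed.

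\textbf{Step 1} ($\F^Y \subseteq \F^{\Wh}$). Integrate the filter to get $m_t = m_0 + \int_0^t P_s\,d\Wh_s$. Since $P$ is deterministic and bounded on $[0,T]$, this Wiener integral is $\F^{\Wh}_t$-measurable, and it has a continuous version. Integration by parts gives the pathwise form
\begin{equation*}
m_t = m_0 + P_t\Wh_t + \int_0^t P_s^2\,\Wh_s\,ds .
\end{equation*}
This shows $m$ is a continuous adapted functional of $\Wh$ with no stochastic-integral ambiguity. Then $Y_t = \int_0^t m_s\,ds + \Wh_t$, which is just the definition of $\Wh$ rearranged. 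Hence each $Y_t$ is $\F^{\Wh}_t$-measurable.

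\textbf{Step 2} ($\F^{\Wh} \subseteq \F^Y$). By Proposition~\ref{prop:filter} and Gaussian conjugacy, $m_s = P_s(m_0/P_0 + Y_s)$, which is $\F^Y_s$-measurable. Then $\Wh_t = Y_t - \int_0^t m_s\,ds$ is $\F^Y_t$-measurable, because $s\mapsto m_s$ is continuous and adapted, so the Lebesgue integral is a measurable functional of the path up to $t$.

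\textbf{Step 3} (augmentations). The two inclusions give equality of the raw filtrations. Both processes are defined on the same probability space, so the null sets are the same. Taking the usual augmentation of both sides therefore preserves equality.

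The only delicate point is Step 1. A stochastic integral is defined only up to null sets, so one must make sure that measurability is with respect to the augmented $\F^{\Wh}$. I would handle this with the integration-by-parts representation above, which avoids the issue entirely by expressing $m$ as a pathwise continuous functional of $\Wh$. For a general random gain this would fail, and one would need strong uniqueness of the filter SDE. Here determinism of $P$ makes the argument immediate.
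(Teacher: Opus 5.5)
Your proof is correct and follows the paper's argument. The deterministic gain makes $m_t = m_0 + \int_0^t P_s\,d\Wh_s$, and hence $Y_t = \int_0^t m_s\,ds + \Wh_t$, adapted to $\F^{\Wh}$; the reverse inclusion is $\Wh_t = Y_t - \int_0^t m_s\,ds$ with $m$ adapted to $\F^Y$. Your integration-by-parts representation, the explicit conjugacy formula and the augmentation step add rigor but do not change the route.
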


The filter is also \emph{policy independent}. The observation equation $dY_t = \rho\,dt + dW_t$ contains no control, because the investor observes exogenous prices rather than his own returns. Beliefs evolve the same way no matter how he trades.

\subsection{Wealth as a path functional}
\label{sec:wealth}

Let $X_t$ be discounted wealth and $u_t$ the discounted dollar position. A control is admissible only if it is $\F^Y$-progressively measurable with $\int_0^T u_s^2\,ds < \infty$ a.s.\ (an exponential-moment restriction of Section~\ref{sec:value-def} is added later). By definition $dX_t = \sigma u_t\,dY_t$, and in innovation form
\begin{equation}
\label{eq:wealth}
    dX_t = \sigma u_t\,(m_t\,dt + d\Wh_t).
\end{equation}
The drift $\sigma m_t u_t$ is \emph{bilinear}, a product of belief and control (the obstruction this creates is the subject of Section~\ref{sec:isaacs}). Wealth and the filter are also driven by the same $d\Wh_t$. Integrating,
\begin{equation}
\label{eq:wealth-functional}
    X_T = x + \int_t^T \sigma u_s\,dY_s,
\end{equation}
so terminal wealth is a functional of $(u, Y)$ that does \emph{not} depend on which measure we place on $Y$. Precisely, for every $Q \ll \Prob$ on $\F_T^Y$, the $\Prob$-stochastic integral is a version of the $Q$-stochastic integral (the It\^o integral is invariant under absolutely continuous changes of measure). So the same random variable $X_T$ works under every measure in the adversary class of Section~\ref{sec:kl}. A distortion reshapes the distribution of the shared path $Y$ while leaving the maps from $Y$ to wealth and beliefs fixed, so one distortion moves both at once.

Under $\F_t$, for feedback controls $u = u(t, X_t, m_t)$, the pair $(X_t, m_t)$ is Markov with $P_t$ a deterministic parameter. Filtering reduces the partial-information problem to a fully observed (controlled Markov) one.

\subsection{The mean--variance objective and its robust embedding}
\label{sec:embedding}

Mean--variance has no terminal-wealth utility to robustify directly, so we route it through the Zhou--Li quadratic embedding; the resulting tracking game is the object the rest of the paper studies.

Following \citet{zhou2000}, the mean--variance problem is handled by embedding: for a target wealth $w$,
\begin{equation}
\label{eq:zhouli}
    \inf_u\,\E\big[(X_T - w)^2\big],
\end{equation}
with terminal condition $(x-w)^2$. Sweeping $w$, a Lagrange multiplier for the mean constraint $\E[X_T] = z$, traces the frontier. Write $g := x - w$ for the tracking error.

This embedding requires care under robustness, because variance and tracking at a fixed target are not the same object under a worst-case measure. Here $\Var_Q(X_T) = \inf_w \E_Q[(X_T - w)^2]$ and the minimizing $w$ depends on $Q$, so ``robust variance'' and ``robust tracking at fixed $w$'' may not coincide. 

Fix an admissible $u$, let $\theta > 0$ be the Hansen--Sargent multiplier, and let $Q$ range (here and throughout) over the finite-entropy class $\A$ of Section~\ref{sec:kl}. The inner worst case is then computed in closed form by the Gibbs variational principle \citep{dupuisellis1997},
\begin{equation}
\label{eq:gibbs}
    \sup_{Q}\Big\{\E_Q[(X_T - w)^2] - \theta\,\KL(Q \Vert \Prob)\Big\} = \theta\log\E_\Prob\!\big[e^{(X_T - w)^2/\theta}\big] =: F_u(w),
\end{equation}
which is finite, $C^2$, and strongly convex with $F_u''(w) = 2 + \tfrac{4}{\theta}\Var_{Q^\star_w}(X_T) \ge 2$. Hence $F_u$ has a unique minimizer, and a direct saddle argument (Appendix~\ref{app:embedding}) gives the following.

\begin{proposition}[Robust embedding]
\label{prop:embedding}
For each admissible $u$ in the exponential-moment class, with $Q^\star$ the Gibbs measure at the optimal target,
\begin{equation}
\label{eq:embedding}
    \sup_Q\big\{\Var_Q(X_T) - \theta\,\KL(Q \Vert \Prob)\big\} = \min_w F_u(w),
\end{equation}
attained at a saddle $(Q^\star, w^\star)$ with $w^\star = \E_{Q^\star}[X_T]$: the optimal Zhou--Li target equals the worst-case mean. Robustifying the mean as well, the worst-case mean--variance criterion $C_\lambda(u) := \sup_Q\{\Var_Q(X_T) - 2\lambda\,\E_Q[X_T] - \theta\,\KL(Q\Vert\Prob)\}$ is the Legendre-type transform $C_\lambda(u) = \min_v\{F_u(v) - 2\lambda v\} + \lambda^2$, and sweeping $\lambda$ traces the robust frontier.
\end{proposition}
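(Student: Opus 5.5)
The argument is a minimax exchange for the convex--concave function
\[
\Phi(Q,w) := \E_Q[(X_T - w)^2] - \theta\,\KL(Q\Vert\Prob),
\]
defined on $\A\times\R$. The starting point is the identity $\Var_Q(X_T) = \inf_w \E_Q[(X_T-w)^2]$. It gives
\[
\text{LHS of }\eqref{eq:embedding} = \sup_Q\inf_w \Phi(Q,w),
\]
while \eqref{eq:gibbs} gives $\inf_w\sup_Q\Phi(Q,w) = \min_w F_u(w)$. Weak duality, $\sup\inf\le\inf\sup$, is automatic. The whole task is therefore to prove the reverse inequality and to exhibit the saddle point.

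\textbf{Step 1: the minimizer of $F_u$.} Since $F_u$ is $C^2$ and strongly convex with $F_u''\ge 2$, it has a unique minimizer $w^\star$. I would compute the derivative by differentiating under the expectation, justified by the exponential-moment class and dominated convergence:
\[
F_u'(w) = -2\,\frac{\E_\Prob\!\big[(X_T-w)e^{(X_T-w)^2/\theta}\big]}{\E_\Prob\!\big[e^{(X_T-w)^2/\theta}\big]} = -2\big(\E_{Q^\star_w}[X_T] - w\big).
\]
Here $dQ^\star_w/d\Prob \propto e^{(X_T-w)^2/\theta}$ is the Gibbs maximizer of \eqref{eq:gibbs}. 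The first-order condition $F_u'(w^\star)=0$ therefore says exactly that $w^\star = \E_{Q^\star}[X_T]$ with $Q^\star := Q^\star_{w^\star}$.

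\textbf{Step 2: $(Q^\star,w^\star)$ is a saddle point.} There are two inequalities to check.
\begin{itemize}
\item For fixed $Q^\star$, the map $w\mapsto \E_{Q^\star}[(X_T-w)^2]$ is minimized at the $Q^\star$-mean, which is $w^\star$ by Step 1. Hence $\Phi(Q^\star,w^\star)\le\Phi(Q^\star,w)$ for all $w$.
\item For fixed $w^\star$, the Gibbs principle gives $\Phi(Q,w^\star)\le\Phi(Q^\star,w^\star) = F_u(w^\star)$ for all $Q\in\A$.
\end{itemize}
Combining the two,
\[
\sup_Q\inf_w\Phi \ \ge\ \inf_w\Phi(Q^\star,w) = \Phi(Q^\star,w^\star) = F_u(w^\star) = \min_w F_u \ \ge\ \sup_Q\inf_w\Phi.
\]
This proves \eqref{eq:embedding}, attained at $(Q^\star,w^\star)$. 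The only technical point is that $Q^\star$ must belong to $\A$, i.e.\ have finite entropy. This is the step I expect to require care. Finite entropy needs $\E_\Prob[(X_T-w^\star)^2 e^{(X_T-w^\star)^2/\theta}]<\infty$, which is slightly more than finiteness of $F_u$. I would take the exponential-moment restriction of Section~\ref{sec:value-def} to be $\E_\Prob[e^{(1+\epsilon)(X_T-w)^2/\theta}]<\infty$ for some $\epsilon>0$, or use the fact that $F_u$ is finite in a neighborhood of $\theta$. Hölder's inequality then gives the required integrability.

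\textbf{Step 3: the mean--variance criterion $C_\lambda$.} I would complete the square:
\[
\Var_Q(X_T) - 2\lambda\,\E_Q[X_T] = \inf_v\E_Q\big[(X_T-v)^2 - 2\lambda v\big] - \lambda^2\ldots
\]
The cleaner route is to use $\E_Q[(X_T-v)^2] = \Var_Q + (\E_QX_T - v)^2$. This yields
\[
\inf_v\{\E_Q[(X_T-v)^2] - 2\lambda v\} = \Var_Q(X_T) - 2\lambda\,\E_Q[X_T] - \lambda^2,
\]
with the infimum attained at $v = \E_Q X_T + \lambda$. Hence
\[
C_\lambda(u) = \sup_Q\inf_v\{\Phi(Q,v) - 2\lambda v\} + \lambda^2.
\]
I would then repeat Steps 1--2 with $\Phi$ replaced by $\Phi - 2\lambda v$. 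The objective $F_u(v)-2\lambda v$ is still strongly convex, its first-order condition reads $\E_{Q^\star_v}[X_T] + \lambda = v$, and this matches the inner minimizer for $Q^\star_v$. The same sandwich argument gives $C_\lambda(u) = \min_v\{F_u(v)-2\lambda v\} + \lambda^2$.

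Sweeping $\lambda$ then traces the robust frontier, since $\lambda$ is the Lagrange multiplier on the worst-case mean.
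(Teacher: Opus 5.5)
Your proposal is correct and follows essentially the same route as the paper's proof. Both arguments identify the target with the worst-case mean through the first-order condition $F_u'(w^\star) = -2(\E_{Q^\star}[X_T] - w^\star) = 0$, and both check the saddle with the Gibbs maximizer at fixed $w^\star$ together with the fixed-$Q^\star$ quadratic minimized at its mean; finite entropy of $Q^\star$ comes from the strict gap $\delta > 1/\theta$, and $C_\lambda$ follows by completing the square. The one detail the paper states and you leave implicit is that every $Q \in \A$ has $\E_Q[X_T^2] < \infty$, so that $\Var_Q(X_T)$ is well defined on the whole adversary class; the paper gets this from the Fenchel--Young inequality $\E_Q[\delta X_T^2] \le \KL(Q\Vert\Prob) + \log\E_\Prob e^{\delta X_T^2}$, and it also follows from the Gibbs inequality you already invoke.
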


This justifies the name ``Markowitz.'' We study the value of the tracking game,
\begin{equation}
\label{eq:value-def}
    V^\theta(t,x,m,P) = \inf_u\,\sup_Q\Big\{\E_Q[(X_T - w)^2] - \theta\,\KL(Q \Vert \Prob)\Big\},
\end{equation}
whose lower envelope over targets is the robust frontier. In all dynamic statements, $\E_Q$ and $\KL$ are understood conditionally on $\F_t$ ($Q$ ranges over finite-entropy distortions of the conditional law of the observation path on $[t, T]$), and the Gibbs identity \eqref{eq:gibbs} holds verbatim in this conditional form. The setup is now fixed: a policy-independent Gaussian filter, terminal wealth as a measure-independent functional of the observation path, and a robust mean--variance criterion $V^\theta$ whose inner worst case is closed form. What is not yet specified is the adversary---which laws she may choose and what each one costs---the subject of the next section.

% =============================
% Section 3: Path-Space Robustness
% =============================
\section{Path-Space Robustness}
\label{sec:robust}

We now make the policy robust. An adversary, nature, distorts the law of the observations and pays for it in relative entropy. Because wealth and beliefs share the innovation, one distortion corrupts trading and learning at once: the dual channel.

\subsection{Why distort the drift, and why on path space}
\label{sec:why-distort}

The ambiguity could sit in several places. We could perturb the prior, but as data accumulates, the posterior concentrates, so prior-robustness is a transient effect. The misspecification that persists is in the \emph{dynamics} of the observations, a persistent drift in the surprise the investor reads from prices. We therefore place the ambiguity on path space. Nature can reweight entire trajectories of $Y$, charged by relative entropy against the reference measure.

\subsection{The Girsanov adversary}
\label{sec:girsanov}

Since $\Wh$ generates the observation filtration (Lemma~\ref{lem:filtration}) and is Brownian, any change of measure on $\F_T^Y$ absolutely continuous with respect to the reference is a Girsanov tilt of $\Wh$. Nature chooses an adapted drift $\varphi_t$ and a density
\begin{equation}
\label{eq:density}
    \frac{dQ^\varphi}{d\Prob}\Big|_{\F_T} = \exp\!\Big(\int_0^T \varphi_s\,d\Wh_s - \half\int_0^T \varphi_s^2\,ds\Big),
\end{equation}
under which $d\Wh_t^Q := d\Wh_t - \varphi_t\,dt$ is a $Q^\varphi$-Brownian motion, provided the right side of \eqref{eq:density} is a true martingale, so that $Q^\varphi$ is a probability measure. We discard this side condition by defining the allowed set on the measure side (Proposition~\ref{prop:completeness}). 

We fix the sign of the distortion $\varphi$ once: the density carries $+\int \varphi\,d\Wh$, the distorted Brownian motion subtracts the drift, and both state equations acquire $+\varphi$. Substituting $d\Wh_t = d\Wh_t^Q + \varphi_t\,dt$ into the dynamics of Section~\ref{sec:filter},
\begin{equation}
\label{eq:distorted-dynamics}
    dX_t = \sigma u_t(m_t + \varphi_t)\,dt + \sigma u_t\,d\Wh_t^Q, \qquad dm_t = P_t\varphi_t\,dt + P_t\,d\Wh_t^Q.
\end{equation}

The dual channel is now explicit: the same scalar $\varphi_t$ enters the wealth drift through the trading exposure $\sigma u_t$ and the filter drift through the gain $P_t$. Quadratic variations are untouched (Girsanov shifts drift, not volatility).

\subsection{The cost of distortion}
\label{sec:kl}

The adversary is charged the relative entropy of $Q^\varphi$ against the reference. On the finite-entropy class this is the expected integrated squared distortion:
\begin{equation}
\label{eq:kl}
    \KL(Q^\varphi \Vert \Prob) = \E_{Q^\varphi}\!\Big[\log\frac{dQ^\varphi}{d\Prob}\Big] = \half\,\E_{Q^\varphi}\!\Big[\int_0^T \varphi_s^2\,ds\Big],
\end{equation}
after substituting $d\Wh = d\Wh^Q + \varphi\,dt$ in the log-density, the integral against $\Wh^Q$ vanishes under $Q^\varphi$-expectation as a square-integrable $Q^\varphi$-martingale on the finite-entropy class (if $\E_{Q}\!\int\varphi^2 = \infty$, both sides are $+\infty$ by the localization in Proposition~\ref{prop:completeness}, and the identity holds in $[0, \infty]$).

The penalty confines nature to distortions that are statistically hard to detect: a distortion the data would quickly expose costs a great deal, one nearly indistinguishable from the reference costs almost nothing, and $\theta$ sets the scale.

Every statistically plausible alternative law is captured this way, with no relaxation gap (Proposition~\ref{prop:completeness}), so the game is a robustness analysis rather than the study of one parametric family.

\begin{proposition}[Completeness of the adversary class]
\label{prop:completeness}
Let $Q \ll \Prob$ on $\F_T^Y$ with $\KL(Q \Vert \Prob) < \infty$. Then there is an adapted $\varphi$ with $\frac{dQ}{d\Prob} = \mathcal{E}(\int \varphi\,d\Wh)_T$ and $\KL(Q \Vert \Prob) = \half\,\E_Q\!\int_0^T \varphi^2\,ds$, and $\Wh - \int \varphi\,ds$ is $Q$-Brownian. Conversely, every true-martingale density of this form with $\half\E_Q\!\int_0^T \varphi^2\,ds < \infty$ arises from such a $Q$, with the same entropy identity.
\end{proposition}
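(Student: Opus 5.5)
Let $Z_t := \E_\Prob[dQ/d\Prob \mid \F_t]$ be the density process. It is a nonnegative uniformly integrable $\Prob$-martingale with $Z_0 = 1$; the initial $\sigma$-field is trivial on the observation side. Since $\F^Y = \F^{\Wh}$ is a Brownian filtration (Lemma~\ref{lem:filtration}), the martingale representation theorem gives $Z_t = 1 + \int_0^t H_s\,d\Wh_s$ with $H$ progressive and $\int_0^T H_s^2\,ds < \infty$ a.s. In particular $Z$ has continuous paths.

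Next I would pass to the logarithm. Set $\tau := \inf\{t : Z_t = 0\}$. On $\{t < \tau\}$ define $\varphi_t := H_t/Z_t$, and set $\varphi := 0$ afterwards. Itô's formula then gives $Z_t = \mathcal{E}(\int\varphi\,d\Wh)_t$ on $[0,\tau)$. A nonnegative continuous martingale stays at zero after hitting it, and $Q(\tau \le T) = \E_\Prob[Z_T \mathbf{1}_{\tau\le T}] = 0$. So $Q$-a.s.\ $Z$ never vanishes, $\int_0^T \varphi^2\,ds < \infty$, and $dQ/d\Prob = \mathcal{E}(\int\varphi\,d\Wh)_T$ holds $Q$-a.s. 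Off that set $Z_T = 0$, which is consistent with the convention that the stochastic exponential equals zero once $\int\varphi^2$ explodes. Girsanov's theorem in its general form (Lenglart; e.g.\ \citet{liptsershiryaev2001}) gives that $\Wh - \int_0^\cdot \langle \Wh, Z\rangle/Z = \Wh - \int\varphi\,ds$ is a $Q$-local martingale. Its quadratic variation is $t$, so by Lévy's characterization it is $Q$-Brownian.

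The main obstacle is the entropy identity, because no a priori integrability of $\varphi$ is available. Write $\log Z_t = \int_0^t \varphi\,d\Wh^Q + \half\int_0^t \varphi^2\,ds$ under $Q$. Localize with $\tau_n := \inf\{t : \int_0^t\varphi^2 \ge n\}\wedge T$, so that the $Q$-stochastic integral stopped at $\tau_n$ is a true $Q$-martingale. This gives
\[
\E_Q[\log Z_{\tau_n}] = \half\,\E_Q\!\int_0^{\tau_n}\varphi^2\,ds .
\]
The right side increases to $\half\E_Q\int_0^T\varphi^2$ by monotone convergence. For the left side, use $\E_Q[\log Z_{\tau_n}] = \E_\Prob[Z_{\tau_n}\log Z_{\tau_n}]$ and the fact that $Z_{\tau_n} = \E_\Prob[Z_T\mid\F_{\tau_n}]$. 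Conditional Jensen applied to the convex function $z\log z$ gives $\E_\Prob[Z_{\tau_n}\log Z_{\tau_n}] \le \KL(Q\Vert\Prob)$, and lower semicontinuity (Fatou, since $z\log z \ge -e^{-1}$) gives the matching liminf. Hence the limit is exactly $\KL(Q\Vert\Prob)$. The same computation shows that both sides equal $+\infty$ together, which is the localization invoked after \eqref{eq:kl}.

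For the converse, take a true-martingale density $Z = \mathcal{E}(\int\varphi\,d\Wh)$ and define $dQ := Z_T\,d\Prob$. Then $Q \ll \Prob$ and $Q$ is a probability measure. Girsanov's theorem makes $\Wh - \int\varphi$ a $Q$-Brownian motion. The hypothesis $\half\E_Q\int\varphi^2 < \infty$ makes $\int\varphi\,d\Wh^Q$ a square-integrable $Q$-martingale, so taking $Q$-expectations of $\log Z_T$ gives $\KL(Q\Vert\Prob) = \half\E_Q\int_0^T\varphi^2$ directly; in particular it is finite. This $Q$ is therefore in the finite-entropy class, and its representing $\varphi$ coincides with the given one by uniqueness of the martingale representation.
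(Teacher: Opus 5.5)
Your proposal is correct and takes essentially the same route as the paper's sketch: martingale representation of the density on the Brownian filtration $\F^{\Wh}=\F^Y$, then $\varphi=(H/Z)\mathbf{1}_{\{Z>0\}}$, then localization of $\log Z$ so that the stopped integral is a true $Q$-martingale, then conditional Jensen (monotonicity of relative entropy under coarsening) together with monotone convergence. Your Fatou lower-semicontinuity step for the reverse inequality, the Lenglart form of Girsanov for $Q\ll\Prob$, and the explicit converse only spell out details that the paper's sketch leaves implicit.
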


\begin{proof}[Proof sketch]
Martingale representation of the density $Z_t = \E_\Prob[\frac{dQ}{d\Prob} \mid \F_t]$ on the Brownian filtration $\F^{\Wh} = \F^Y$ gives $Z = 1 + \int H\,d\Wh$. Setting $\varphi = \frac{H}{Z}\mathbf{1}_{\{Z > 0\}}$ (with $\int \varphi^2 = \infty$ where $Z$ hits $0$) gives $\frac{dQ}{d\Prob} = \mathcal{E}(\int \varphi\,d\Wh)_T$. The entropy identity follows by localizing $\log Z$. The stopped integral is a true $Q$-martingale, monotonicity of relative entropy under coarsening, then monotone convergence.
\end{proof}

We define the adversary's allowed set on the measure side, $\A := \{Q \ll \Prob \text{ on } \F_T^Y : \KL(Q \Vert \Prob) < \infty\}$. Every $Q \in \A$ is automatically a Girsanov tilt, so no martingality side-condition appears anywhere in the definition of the problem. Conditions (Novikov, Kazamaki) reappear only when we check whether a specific \emph{feedback} is admissible (Section~\ref{sec:verify}).

\subsection{What the distortion does to learning}
\label{sec:learning-distortion}

Under $Q^\varphi$, $m_t$ is not the true posterior mean of $\rho$: the measure lives on $\F_T^Y$, the observation filtration, where $\rho$ is not measurable. It is a statistic, the reference Kalman--Bucy filter run on whatever observations arrive. Under the distorted law those observations are tampered with, so the filter applies the reference updating rule unchanged, returning a biased belief from correct bookkeeping on corrupted input.

We commit to the reference filter and let nature distort its observations. Re-deriving the correct posterior under each candidate model would instead assume a structured joint law of $(\rho, Y)$ for each---exactly the knowledge the robust investor lacks---and, since the committed filter's bias is itself the second channel, would collapse the dual channel to one.

\subsection{Two boundaries of the formulation}
\label{sec:scope}

The formulation has two boundaries. First, the adversary distorts drift, never volatility; a volatility change is mutually singular with the reference, hence at infinite entropy, and the drift is the irreducible input observations cannot pin down. The division weakens with several assets. Second, the multiplier penalty $\tfrac{\theta}{2}\varphi^2$ is additive in time, so the dynamic program of Section~\ref{sec:isaacs} doesn't need an auxiliary state; a hard constraint would require carrying the remaining entropy budget as an extra state variable.

With the adversary class complete (Proposition~\ref{prop:completeness}), the dual channel explicit, and the reference filter committed, the robust problem is fully specified: a single payoff functional, a complete set of admissible laws, and a price for each. What remains is to play it out.

% =============================
% Section 4: Isaacs Equation
% =============================
\section{The Isaacs Equation and Structural Results}
\label{sec:isaacs}

The robust value solves an Isaacs equation. Optimizing over both players puts closed form out of reach.

\subsection{The risk-sensitive value}
\label{sec:value-def}

Since the inner worst case is closed form, we define the robust value directly as a one-player risk-sensitive problem rather than as a dynamic two-player game. Recall from \eqref{eq:wealth-functional} that $X_T$ is a measure-independent functional of the observation path. For any fixed admissible $u$, the inner worst case over $Q \in \A$ is computed in closed form by the Gibbs principle of Section~\ref{sec:embedding}:
\begin{equation}
\label{eq:static-value}
    V^\theta(t,x,m,P) = \inf_u\,\theta\log\E_\Prob\!\Big[\,e^{(X_T - w)^2/\theta}\,\Big|\,\F_t\Big].
\end{equation}
This is an ordinary (if super-Gaussian) risk-sensitive control problem under the reference measure. We take it as the \emph{definition} of $V^\theta$. Writing
\begin{equation}
\label{eq:J-def}
    J(u, Q) := \E_Q[(X_T - w)^2] - \theta\,\KL(Q \Vert \Prob), \qquad J^\theta(u) := \sup_{Q \in \A} J(u, Q),
\end{equation}
we have $V^\theta = \inf_u J^\theta(u)$, a single minimization. Its dynamic-programming equation is written as a min--max over the investor and nature. Section~\ref{sec:isaacs-eq} derives its PDE \eqref{eq:isaacs}.

The admissibility class completes the one of Section~\ref{sec:wealth}: strategies with $\E_\Prob[e^{\delta X_T^2}] < \infty$ for some $\delta > 1/\theta$. Heavier tails carry infinite robust cost and are excluded; the borderline exponent $1/\theta$ is excluded by definition.

Two bounds locate $V^\theta$. Taking $u \equiv 0$ gives $X_T = x$, and distorting a constant payoff buys nature nothing against a positive entropy cost, so $V^\theta \le (x-w)^2$; this is the ceiling, attained by doing nothing. Taking $\varphi \equiv 0$ recovers the non-robust De~Franco value, so $V^\theta \ge V^0 = A g^2$ with $A \le 1$ (Appendix~\ref{app:riccati}); this is the floor, attained with no robustness. Hence
\begin{equation}
\label{eq:sandwich}
    A\,g^2 \le V^\theta \le g^2,
\end{equation}
and $V^\theta$ is non-increasing in $\theta$.

\subsection{The Isaacs equation and the optimizers}
\label{sec:isaacs-eq}

We now derive the dynamic-programming PDE and solve the two optimizations explicitly. Nature moves first---the worst distortion for a fixed trade---and the investor best-responds, collapsing the Isaacs equation to a single reduced PDE in which every robustness effect enters through the squared innovation sensitivity.

Applying It\^o's formula to a smooth candidate $V$ under the distorted dynamics \eqref{eq:distorted-dynamics} and invoking dynamic programming gives the Isaacs equation
\begin{equation}
\label{eq:isaacs}
\begin{split}
    0 = V_t + \frac{P^2}{2}V_{mm} - P^2 V_P + \inf_u\sup_\varphi\Big\{&\sigma u(m+\varphi)V_x + \frac{\sigma^2 u^2}{2}V_{xx} \\
    &+ P\varphi V_m + \sigma u P V_{xm} - \frac{\theta}{2}\varphi^2\Big\},
\end{split}
\end{equation}
with terminal condition $V(T,x,m,P) = (x-w)^2$. The bracket is jointly quadratic in $(u, \varphi)$: concave in $\varphi$ (the penalty), and convex in $u$ given $V_{xx} > 0$. Nature optimizes first. Writing the innovation sensitivity $\Sigma := \sigma u V_x + P V_m$, the inner FOC gives
\begin{equation}
\label{eq:phi-star}
    \varphi^\star(u) = \frac{\Sigma}{\theta} = \frac{\sigma u V_x + P V_m}{\theta},
\end{equation}
with achieved value $+\Sigma^2/2\theta \ge 0$, so the robustness term raises the cost of the game. Define
\begin{equation}
\label{eq:BC-def}
    \eps := 1/\theta, \qquad
    B_\theta := V_{xx} + \eps V_x^2, \qquad
    C_\theta := m V_x + P V_{xm} + \eps P V_x V_m.
\end{equation}
Substituting $\varphi^\star$ and optimizing over $u$ yields
\begin{equation}
\label{eq:reduced-pde}
    u^\star = -\frac{C_\theta}{\sigma B_\theta}, \qquad 0 = V_t + \frac{P^2}{2}V_{mm} - P^2 V_P + \frac{\eps P^2 V_m^2}{2} - \frac{C_\theta^2}{2 B_\theta}.
\end{equation}
Both robustness terms come from the squared sensitivity $\Sigma^2$: a coupling in $C_\theta$, where the wealth and belief gradients combine through the shared channel, and a curvature correction in $B_\theta$.

The saddle requires $V_{xx} > 0$, not just $B_\theta > 0$; without it the Hamiltonian has no saddle and the verification of Section~\ref{sec:verify} fails. We carry $V_{xx} > 0$ as a standing requirement, mild since $V^\theta$ is convex in $x$.

\subsection{Cole--Hopf: robustness into the boundary}
\label{sec:colehopf}

The standard risk-sensitive substitution $U = e^{V/\theta}$ removes the robustness nonlinearities. Three terms collapse,
\begin{equation}
\label{eq:colehopf-cancel}
    \frac{P^2}{2}V_{mm} + \frac{\eps P^2 V_m^2}{2} \to \frac{\theta P^2}{2}\frac{U_{mm}}{U}, \quad
    B_\theta \to \theta\frac{U_{xx}}{U}, \quad
    C_\theta \to \theta\frac{m U_x + P U_{xm}}{U},
\end{equation}
and the equation becomes the De~Franco equation with an exponential terminal condition:
\begin{equation}
\label{eq:colehopf-pde}
    0 = U_t + \frac{P^2}{2}U_{mm} - P^2 U_P - \frac{(m U_x + P U_{xm})^2}{2 U_{xx}}, \qquad U(T,x,m,P) = e^{(x-w)^2/\theta}.
\end{equation}
The robustness has moved entirely into the terminal condition. This duality is classical \citep{ahs2003}; its consequence for De~Franco's structure is new: the exponential terminal condition breaks the closed form. The transform needs $U_{xx} > 0$ (i.e.\ $B_\theta > 0$), distinct from the saddle's $V_{xx} > 0$; we keep the two separate.

\subsection{The rational obstruction to closure}
\label{sec:obstruction}

De~Franco's value is quadratic in wealth. Our terminal condition \eqref{eq:colehopf-pde} is itself an exponential of a quadratic, which suggests the robust value might be quadratic too; it is not.

We test the most general quadratic-in-wealth form,
\begin{equation}
\label{eq:obstruction-ansatz}
    V = A(t,m,P)\,g^2 + L(t,m,P)\,g + D(t,m,P),
\end{equation}
keeping the linear term: parity does not force $L = 0$ without a uniqueness theorem, and uniqueness for the exact equation is open (Section~\ref{sec:verify}).

Substitute the ansatz into the optimized equation \eqref{eq:reduced-pde}; the control term becomes rational in $g$ with denominator $2B_\theta$, for $V$ of this form
\begin{equation}
\label{eq:Btheta-quad}
    B_\theta = 4\eps A^2 g^2 + 4\eps A L\,g + (2A + \eps L^2).
\end{equation}
A polynomial solution of the ansatz's form exists only if this denominator divides the numerator,
\[
    B_\theta \mid C_\theta^2 \quad \text{in } \R[g],
\]
so the division remainder must vanish identically. It does not. The remainder is rational in the coefficients and their first $m$-derivatives, with denominators monomial in $\eps$ and $A$, so it extends continuously to $t = T$. There, the boundary data ($A \to 1$, and $L$, $D$, and their first $m$-derivatives $\to 0$, by the hypothesis below) pin it to
\begin{equation}
\label{eq:obstruction-remainder}
    -\frac{2m^2}{\eps} \neq 0 \quad \text{for } m \neq 0.
\end{equation}
Hence no quadratic-in-$g$ ansatz can solve the equation.

\begin{theorem}[Rational obstruction]
\label{thm:obstruction}
On any terminal strip meeting $\{m \neq 0\}$, the reduced constant-multiplier robust equation \eqref{eq:reduced-pde} with terminal condition $(x - w)^2$ admits no quadratic-in-$g$ solution whose coefficients and first $m$-derivatives extend continuously to $t = T$. Equivalently, via Cole--Hopf, $U$ admits no exponential-quadratic closure.
\end{theorem}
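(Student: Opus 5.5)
The plan is to argue by contradiction: assume a solution of the form \eqref{eq:obstruction-ansatz}, show that a polynomial-division remainder must vanish identically, and evaluate its limit at $t=T$. Suppose $V = A g^2 + L g + D$ solves \eqref{eq:reduced-pde} on a strip $(T-\delta,T)\times\{\dots\}$ meeting $\{m\neq 0\}$. Suppose also that $A,L,D$ and their first $m$-derivatives extend continuously to $t=T$ with boundary values $A=1$ and $L=D=A_m=L_m=D_m=0$. These values are forced by the terminal condition $(x-w)^2$ once the continuous extension is granted.

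\textbf{Step 1: the polynomial structure in $g$.} Fix $(t,m,P)$ in the strip and view everything as polynomials in $g$. From the ansatz, $V_x=2Ag+L$, $V_{xx}=2A$, $V_m=A_m g^2+L_m g+D_m$ and $V_{xm}=2A_m g+L_m$. Hence $B_\theta$ is the quadratic \eqref{eq:Btheta-quad}, with leading coefficient $4\eps A^2$. Likewise
\[
C_\theta = m(2Ag+L)+P(2A_m g+L_m)+\eps P(2Ag+L)(A_m g^2+L_m g+D_m)
\]
is a cubic in $g$. The terms $V_t$, $\tfrac{P^2}{2}V_{mm}$, $-P^2V_P$ and $\tfrac{\eps P^2}{2}V_m^2$ are all polynomials in $g$; this uses only the pointwise identity for $t<T$, so no regularity of $V_t$ at $T$ is needed. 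The equation therefore says that $C_\theta^2/(2B_\theta)$ equals a polynomial in $g$ for every $g\in\R$.

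\textbf{Step 2: reduction to a remainder.} Near $T$, $A$ is close to $1$, so the leading coefficient $4\eps A^2$ of $B_\theta$ is nonzero. Euclidean division gives $C_\theta^2 = q\,B_\theta + r_1 g + r_0$, where $q$ has degree at most $4$. Since $C_\theta^2/B_\theta$ agrees with a polynomial on all of $\R$, whenever $B_\theta$ has no real zeros the identity $r_1 g+r_0 = B_\theta\cdot(\text{polynomial}-q)$ forces $r_1=r_0=0$, by a degree count. If $B_\theta$ has real zeros, the rational function would blow up at them unless they are cancelled, and again $r\equiv 0$. So the hypothesis forces $B_\theta \mid C_\theta^2$ in $\R[g]$. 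Because the divisor's leading coefficient is invertible near $T$, $r_0$ and $r_1$ are rational expressions in $A,L,D,A_m,L_m,D_m,m,P,\eps$ whose denominators are powers of $\eps A$. They are therefore continuous up to $t=T$ under the hypothesis, and must vanish there as well.

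\textbf{Step 3: evaluation at $t=T$.} Inserting the boundary values gives $C_\theta=2mg$ and $B_\theta=4\eps g^2+2$. Then
\[
4m^2g^2=\tfrac{m^2}{\eps}\,(4\eps g^2+2)-\tfrac{2m^2}{\eps},
\]
so $r_1=0$ and $r_0=-2m^2/\eps$, which is \eqref{eq:obstruction-remainder}. This is nonzero on $\{m\neq0\}$, contradicting Step 2. The Cole--Hopf equivalence follows because $U=e^{V/\theta}$ is exponential-quadratic in $g$ exactly when $V$ is quadratic in $g$. The transform \eqref{eq:colehopf-cancel} is an equivalence wherever $U_{xx}>0$, and this holds near $T$ since there $B_\theta\approx 4\eps g^2+2>0$.

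\textbf{Main obstacle.} The step needing the most care is Step 2. It requires justifying that the remainder coefficients are genuinely continuous up to $T$, which is where invertibility of $4\eps A^2$ and the continuity of the first $m$-derivatives enter. It also requires the passage from ``$C_\theta^2/B_\theta$ is polynomial on $\R$'' to exact divisibility, which I would handle with the degree argument above. The terminal evaluation itself is a short computation.
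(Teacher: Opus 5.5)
Your proposal is correct and follows the paper's argument essentially step for step. You use the quadratic-in-$g$ ansatz, the necessary condition $B_\theta \mid C_\theta^2$ in $\R[g]$, continuity of the division remainder up to $T$ (its denominators are powers of the leading coefficient $4\eps A^2$), and the terminal evaluation $-2m^2/\eps \neq 0$. Your degree-count justification of exact divisibility, and your observation that the boundary values of the first $m$-derivatives are forced by the continuous extension, usefully fill in details the paper leaves implicit (the case split on real zeros of $B_\theta$ is moot, since its discriminant is $-32\eps A^3<0$ wherever $A>0$).
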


The obstruction vanishes at $\eps = 0$ (De~Franco closes) but persists at $P = 0$: at known drift the terminal remainder is unchanged ($C_\theta|_{t=T} = 2\rho g$ regardless of $P$), so the contradiction $-2\rho^2/\eps \neq 0$ holds for any nonzero drift. The robustness normalization itself destroys quadratic closure, even at $P = 0$, where there is nothing to learn.

So the reduction is exact but the closed form is gone: Cole--Hopf concentrates all of robustness into an exponential-quadratic terminal condition that no quadratic value can match (Theorem~\ref{thm:obstruction}). With closure barred, the next section expands in small ambiguity.

% =============================
% Section 5: Small-Ambiguity Expansion
% =============================
\section{The Small-Ambiguity Expansion}
\label{sec:expansion}

The robust value resists closed form (Theorem~\ref{thm:obstruction}), so we expand in small ambiguity, $\eps = 1/\theta \to 0$. A single pathwise identity yields the first-order objects as corollaries.

\subsection{The pathwise identity}
\label{sec:pathwise}

Fix the non-robust optimal feedback, obtained from \eqref{eq:reduced-pde} at $\eps = 0$,
\begin{equation}
\label{eq:u0}
    u_0 = -\frac{m\,h}{\sigma}\,g, \qquad h = 1 + 2P\alpha,
\end{equation}
which makes the tracking error multiplicative. From here $G_t := X_t - w$ denotes the tracking-error process. Under the reference measure $G_t$ obeys a geometric equation,
\[
    dG_t = -m_t^2 h_t G_t\,dt - m_t h_t G_t\,d\Wh_t,
\]
so $G_t = g\,M_t$ with $M_t$ a positive process independent of the initial error $g$. Wealth and beliefs share the innovation: substituting $d\Wh_t = \frac{dm_t}{P_t}$ couples them, and a Riccati identity collapses the result. (Throughout, $\alpha = \alpha(t)$ is the De~Franco coefficient of Appendix~\ref{app:riccati}, and $z := P(T-t)$, $k := 1 + P\alpha$, $h := 1 + 2P\alpha$.)

\begin{lemma}[Pathwise identity]
\label{lem:pathwise}
For $P > 0$ and $g \neq 0$, along $u_0$ under the reference measure,
\begin{equation}
\label{eq:pathwise}
    \log\frac{G_T^2}{g^2} = \frac{h\,m^2}{P} - \frac{m_T^2}{P_T} + \log\frac{1+2z}{1+z}.
\end{equation}
The right side is a deterministic, strictly decreasing function of $m_T^2$ alone.
\end{lemma}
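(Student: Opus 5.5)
The plan is to apply It\^o's formula to $\log G_t^2$ along $u_0$, use the shared innovation to turn the stochastic integral into an exact differential in $m$, and use the De~Franco Riccati equation to cancel every remaining $m$-dependent drift.

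\textbf{Step 1 (It\^o on the logarithm).} Since $G_t = g M_t$ with $M_t>0$ a stochastic exponential, $\log(G_t^2/g^2) = 2\log M_t$ is well defined for $g\neq 0$ and needs no localization. From $dG_t = -m_t^2h_tG_t\,dt - m_th_tG_t\,d\Wh_t$,
\[
d\log G_t^2 = -\big(2m_t^2h_t + m_t^2h_t^2\big)\,dt - 2m_th_t\,d\Wh_t .
\]

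\textbf{Step 2 (shared innovation).} Because $P_t>0$ is deterministic, Proposition~\ref{prop:filter} gives $d\Wh_t = dm_t/P_t$. The martingale term therefore becomes $-2(h_t/P_t)\,m_t\,dm_t$. I will rewrite this as an exact differential. Since $d\langle m\rangle_t = P_t^2\,dt$ and $h/P$ is deterministic,
\[
d\Big(\frac{h\,m^2}{P}\Big) = m^2\,\frac{d}{dt}\Big(\frac{h}{P}\Big)dt + \frac{2h}{P}\,m\,dm + hP\,dt .
\]
Substituting gives
\[
d\log G_t^2 = -\,d\Big(\frac{h m^2}{P}\Big) + m_t^2\Big[\frac{d}{dt}\Big(\frac{h}{P}\Big) - 2h - h^2\Big]dt + h_tP_t\,dt .
\]

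\textbf{Step 3 (Riccati collapse).} The $m^2$ bracket must vanish identically. Using $\dot P = -P^2$, this is equivalent to $\dot h = P\,h(h+1)$, i.e.\ an ODE for $\alpha$ through $h = 1+2P\alpha$. I will check this against the De~Franco Riccati equation for $\alpha$ in Appendix~\ref{app:riccati}, with $\alpha(T)=0$ so that $h_T=1$.

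Integrating from $t$ to $T$ then gives
\[
\log\frac{G_T^2}{g^2} = \frac{h m^2}{P} - \frac{m_T^2}{P_T} + \int_t^T h_sP_s\,ds .
\]
It remains to show that the deterministic integral equals $\log\frac{1+2z}{1+z}$ with $z=P(T-t)$. I plan to use the explicit solution $\alpha(t)$, or equivalently the explicit $h$ from the ODE above, together with $P_s = P/(1+P(s-t))$. The check is that $\frac{d}{dt}\log\frac{1+2z}{1+z} = -hP$ and that the expression vanishes at $t=T$. A candidate is $h = \frac{1+z}{1+2z}\cdot$(a factor matching $k$); I will derive it directly from the appendix formulas rather than guess.

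\textbf{Step 4 (monotonicity).} Conditional on $\F_t$, the quantities $h$, $m$, $P$, $z$ and $P_T$ are known, so the right side is a deterministic affine function of $m_T^2$ with slope $-1/P_T<0$. It is therefore strictly decreasing in $m_T^2$.

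The main obstacle is Step 3. Everything hinges on the precise form of the Riccati equation for $\alpha$ and on producing its closed-form integral. If the sign conventions in the appendix differ, the bracket identity $\frac{d}{dt}(h/P) = 2h+h^2$ is the relation I must recover, and it fixes what the appendix ODE has to say.
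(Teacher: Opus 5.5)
Your proposal is correct and follows the paper's proof essentially step for step: It\^o on the logarithm, the substitution $d\Wh = dm/P$, cancellation of the $m^2$ drift via the Riccati identity $(h/P)' = h^2 + 2h$, and integration of the leftover drift $hP$. Both checks you leave pending go through. First, $\dot h = Ph(h+1)$ follows directly from $\alpha' = 1 + 4P\alpha + 2P^2\alpha^2$. Second, with $h = 1/(1+2z)$ (this is $h$ itself, not your candidate, which is $k$) and $\dot z = -P(1+z)$, one gets $\frac{d}{dt}\log\frac{1+2z}{1+z} = -hP$, and this expression vanishes at $t=T$.
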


\begin{proof}
Take $g > 0$ WLOG ($G_t = g M_t$ with $M_t > 0$, and the identity involves only $G^2$). It\^o on $\log G_t$ gives
\[
    d\log G = (-m^2 h - \half m^2 h^2)\,dt - m h\,d\Wh.
\]
Set $F_t := h_t m_t^2/(2P_t)$. Using $dm = P\,d\Wh$, $\dot P = -P^2$, and $(h/P)' = h^2 + 2h$, direct computation shows $d(\log G + F)$ has zero diffusion and drift $\half h P\,dt$. Integrating from $t$ to $T$ gives $\int_t^T \half h_s P_s\,ds = \half\log\frac{1+2z}{1+z}$, and rearranging gives back the claim.
\end{proof}

On the optimal trajectory the identity collapses the dual channel: the same Brownian motion drives both wealth and beliefs, and along $u_0$ the two cancel to a function of the single scalar $m_T$. Two consequences are immediate.

First, the De~Franco loss is pathwise bounded for $P > 0$. Since $m_T^2 \ge 0$,
\begin{equation}
\label{eq:ceiling}
    G_T^2 \le g^2\,\frac{1+2z}{1+z}\,\exp\!\Big(\frac{m^2}{P(1+2z)}\Big) < \infty \quad \text{almost surely.}
\end{equation}
Second, every moment of $G_T^2$ is an explicit Gaussian integral over $m_T \sim \N(m, P - P_T)$, because the loss is a deterministic function of $m_T$; the closed form follows in Section~\ref{sec:coefficient}.

\subsection{Baseline, correction, and the perfect-square source}
\label{sec:baseline}

The first-order objects follow by taking moments in Lemma~\ref{lem:pathwise}, or equivalently by expanding the optimized PDE.

The non-robust value is $V_0 = \E_t[G_T^2]$, one Gaussian integral:
\begin{equation}
\label{eq:V0}
    V_0 = A(t,m,P)\,g^2, \qquad A = e^{\alpha m^2 + \gamma},
\end{equation}
recovers the De~Franco coefficients (Appendix~\ref{app:riccati}). At $g = 0$ the investor holds $u \equiv 0$, so the loss is zero and the value vanishes at every order: $D \equiv 0$ unconditionally.

Expanding $V^\eps = V_0 + \eps V_1 + O(\eps^2)$ in the optimized PDE, the first-order equation is linear in $V_1$ with a source set by the baseline,
\begin{equation}
\label{eq:V1-pde}
    \Lg[V_1] + \half\Sigma_0^2 = 0, \qquad V_1(T) = 0, \qquad \Sigma_0 = \sigma u_0 (V_0)_x + P(V_0)_m = -2mAk\,g^2,
\end{equation}
where $\Lg$ is the generator of the non-robust optimally-controlled state. The source is the squared innovation sensitivity of the baseline, a perfect square set by nature's quadratic objective. It is purely quartic, $\half\Sigma_0^2 = 2m^2 A^2 k^2 g^4$, with no lower-order term because $D \equiv 0$.

Along $u_0$, the baseline value $V_0(s, X^\star_s, m_s, P_s)$ is a reference-measure martingale with $dV_0 = \Sigma_0\,d\Wh$, so $\Sigma_0$ is its martingale integrand. (A true, square-integrable martingale: at $P > 0$ the ceiling \eqref{eq:ceiling}, applied on $[t, s]$, together with Gaussian moments of $m_s$ bounds $\E\!\int \Sigma_0^2$; at $P = 0$, lognormal moments do.) The Feynman--Kac solution of \eqref{eq:V1-pde} is then a variance.

\begin{theorem}[Variance form of the premium]
\label{thm:variance}
Along the non-robust optimum, under the reference measure,
\begin{equation}
\label{eq:variance}
    V_1 = \half\,\Var_t\!\big[(X^\star_T - w)^2\big].
\end{equation}
The first-order robustness premium is half the variance of the realized non-robust loss.
\end{theorem}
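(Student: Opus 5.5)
The identity comes from the Feynman--Kac representation of the first-order equation \eqref{eq:V1-pde}, followed by an It\^o-isometry argument. Write $L_T := G_T^2 = (X^\star_T - w)^2$ for the realized non-robust loss along $u_0$. Since $V_1$ solves $\Lg[V_1] + \half\Sigma_0^2 = 0$ with $V_1(T) = 0$, and $\Lg$ is the generator of the state $(X^\star, m, P)$ under $u_0$ and the reference measure, Feynman--Kac gives
\[
V_1(t,x,m,P) = \half\,\E_t\!\Big[\int_t^T \Sigma_0(s, X^\star_s, m_s, P_s)^2\,ds\Big].
\]
This representation requires that $V_1$ be the unique solution in a growth class where the local martingale from It\^o's formula is a true martingale. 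I would therefore \emph{define} $V_1$ by this expectation and check directly that it solves \eqref{eq:V1-pde}. The source $\half\Sigma_0^2 = 2m^2A^2k^2g^4$ is explicit and polynomial in $g$, which keeps that check routine.

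Next I use the martingale structure of the baseline value. Along $u_0$, $N_s := V_0(s, X^\star_s, m_s, P_s)$ satisfies $dN_s = \Sigma_0\,d\Wh_s$, and it starts at $N_t = V_0 = \E_t[L_T]$. At $s = T$, because $A(T) = 1$, it ends at $N_T = G_T^2 = L_T$. So $L_T - \E_t[L_T] = \int_t^T \Sigma_0\,d\Wh_s$. If this integral is square-integrable, the It\^o isometry gives
\[
\Var_t[L_T] = \E_t\Big[\Big(\int_t^T \Sigma_0\,d\Wh_s\Big)^2\Big] = \E_t\Big[\int_t^T \Sigma_0^2\,ds\Big] = 2V_1,
\]
which is the claim.

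The main obstacle is justifying square-integrability, that is, $\E_t\int_t^T \Sigma_0^2\,ds < \infty$. This makes $N$ a genuine $L^2$ martingale and validates both Feynman--Kac and the isometry.

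\textbf{Case $P > 0$.} Lemma~\ref{lem:pathwise}, applied on $[t,s]$, gives the ceiling \eqref{eq:ceiling}: $G_s^2 \le C\,g^2 e^{m^2/(P(1+2z))}$, deterministic given the initial data. Then $\Sigma_0^2 = 4m_s^2A_s^2k_s^2G_s^4$ is bounded by a polynomial in $m_s$ times $A_s^2$. Here $A_s = e^{\alpha_s m_s^2 + \gamma_s}$ with $m_s$ Gaussian of variance $P - P_s$. Integrability of $A_s^2$ needs $2\alpha_s(P - P_s) < 1$. I would try to avoid checking this Riccati inequality. Since $N_s = A_s G_s^2 = \E_s[G_T^2]$ and $G_T^2$ is bounded, the conditional expectation $N_s$ is itself bounded. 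So $N$ is a bounded martingale, and its quadratic variation has finite expectation automatically. That yields $\E_t\int\Sigma_0^2 = \E_t[(N_T - N_t)^2] < \infty$ with no computation.

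\textbf{Case $P = 0$.} Here $G$ is a geometric Brownian motion with constant coefficients. $L_T$ is lognormal, so it has all moments, and the isometry applies by direct computation.

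Finally, equality of the Feynman--Kac expression with the PDE solution follows by verifying that the explicit $V_1$ is smooth. That is immediate because the quantity $\Var_t[L_T]$ is a Gaussian integral of an explicit function of $m_T$, by Lemma~\ref{lem:pathwise}.
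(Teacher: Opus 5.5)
Your proposal is correct and follows the paper's own route. You define $V_1$ by the Feynman--Kac representation of \eqref{eq:V1-pde}, recognize $\Sigma_0$ as the martingale integrand of $V_0$ along $u_0$ with $V_0(T)=G_T^2$, and conclude by the It\^o isometry, using the ceiling \eqref{eq:ceiling} for square-integrability at $P>0$ and lognormal moments at $P=0$. The one variation is at $P>0$: you get $\E_t\int_t^T\Sigma_0^2\,ds<\infty$ from boundedness of the martingale $N$, whereas the paper bounds $\Sigma_0^2$ by the ceiling times Gaussian moments of $m_s$; your concern about integrability of $A_s^2$ is moot either way, since $\alpha\le 0$ and $\gamma\le 0$ give $A\le 1$.
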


The penalty is quartic in tracking error because the loss is quadratic in it.

\subsection{The quartic coefficient is forced}
\label{sec:quartic}

Having identified the premium as a variance, we now isolate its size. The belief dependence collapses to a single scalar, and we state what about the expansion is rigorous and what stays open; the closed form for $b$ is produced in Section~\ref{sec:coefficient}.

Under $u_0$ the tracking error is $G_s = g M_s$ with $M$ independent of $g$, so the Feynman--Kac solution is $g^4$ times a function of $m$. Normalizing $V_1 = A^2 b\,g^4$ to absorb the baseline dynamics, $b$ solves a self-contained equation:
\begin{equation}
\label{eq:b-pde}
    b_t + \frac{P^2}{2}b_{mm} - P^2 b_P - 4Pkm\,b_m + 4k^2 m^2\,b + 2k^2 m^2 = 0, \qquad b(T) = 0,
\end{equation}
The coefficient algebra reduces: the $m^2 b$ coefficient assembles to $4(1 + P\alpha)^2 = 4k^2$ (Appendix~\ref{app:normalization}). After normalization the entire belief dependence enters only through the single scalar $k$, with $k \in (\tfrac12, 1]$ uniformly since $r := P\alpha \in (-\tfrac12, 0]$.

Since $b = \frac{V_1}{V_0^2}$, the expansion is $V^\eps = V_0(1 + \eps b V_0) + O(\eps^2)$. The relative premium $\eps b V_0$ is proportional to the value itself.

\begin{theorem}[First-order expansion]
\label{thm:expansion}
Write $V^\eps = A\,g^2 + \eps A^2 b\,g^4 + R^\eps$ (we write $V^\eps \equiv V^\theta$, $\eps = 1/\theta$), where $b$ solves \eqref{eq:b-pde} and $V_1 = A^2 b g^4 = \half\Var_t[(X^\star_T - w)^2] \ge 0$. Then: (i) the coefficient $V_1$ is rigorous and unconditional; (ii) the PDE residual of $A g^2 + \eps A^2 b g^4$ in the optimized equation vanishes identically at orders $\eps^0$ and $\eps^1$ (this is the construction of $(\alpha, \gamma, b)$), and the remainder is $O\!\big(\eps^2(1 + g^6)\big)$ locally uniformly in $(t, m)$: $O(\eps^2)$ on compact sets of tracking error, though \emph{not} uniformly over the cores $\{\eps g^2 \le c\}$, at whose edge it degrades; (iii) at $P > 0$ the one-sided value bound $V^\eps \le A g^2 + \eps A^2 b g^4 + O(\eps^2)$ holds unconditionally (Corollary~\ref{cor:onesided}).
\end{theorem}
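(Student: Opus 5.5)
Theorem~\ref{thm:expansion} has three parts, and I would prove each separately, reusing the earlier results. For (i), I would start from Theorem~\ref{thm:variance}, which gives $V_1 = \half\Var_t[G_T^2]$ along $u_0$. The proof of that theorem rests on the martingale property of $V_0$ along $u_0$, with integrand $\Sigma_0$, plus square integrability. For $P>0$ I would justify this from the ceiling \eqref{eq:ceiling} and Gaussian moments of $m_T$; for $P=0$ I would use lognormal moments. Using $G_s = gM_s$ with $M$ independent of $g$, the variance is $g^4\,\Var_t[M_T^2]$. I would then define $b := \Var_t[M_T^2]/(2A^2)$. Feynman--Kac applied to \eqref{eq:V1-pde}, with the normalization computation of Appendix~\ref{app:normalization}, shows that this $b$ solves \eqref{eq:b-pde}. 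By Lemma~\ref{lem:pathwise}, $M_T^2$ is an explicit function of the Gaussian $m_T$, so $b$ is finite and smooth. Nonnegativity of $V_1$ is immediate. None of this uses any unproven assumption, which is what ``unconditional'' means here.

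For (ii), I would substitute $W^\eps := Ag^2 + \eps A^2 b g^4$ into the optimized equation \eqref{eq:reduced-pde}. I would write the nonlinear term as $C_\theta^2/(2B_\theta)$ with $B_\theta = W_{xx} + \eps W_x^2$, and expand this quotient in $\eps$. At order $\eps^0$ the residual vanishes by the Riccati equations for $(\alpha,\gamma)$. At order $\eps^1$ it vanishes because the linearization of $-C^2/2B$ around $V_0$ reproduces $\Lg$ with the optimal control $u_0$ (an envelope argument), and the explicit $\eps$-terms reproduce the source $\half\Sigma_0^2$. Everything left is an explicit $\eps^2$ Taylor remainder of the quotient. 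To bound it I would track powers of $g$. The numerator $C$ is cubic in $g$ at order $\eps$, and the denominator is $B = 2A + O(\eps g^2)$. Expanding $1/B$ to second order therefore gives terms of degree at most $g^6$ multiplied by $\eps^2$. The coefficients are polynomial in $m$ and the bounded quantities $b, b_m, A$, so the bound holds locally uniformly in $(t,m)$.

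\textbf{Main obstacle.} This is the step I expect to be hardest. The expansion of $1/B_\theta$ is only valid while $\eps g^2$ is small, because $B$ contains $\eps W_x^2 \sim \eps A^2 g^2$ relative to $2A$. So I would state the remainder as $O(\eps^2(1+g^6))$ on compact $g$-sets, and explicitly note that the Taylor constant blows up as $\eps g^2 \to c$. That blow-up accounts for the non-uniformity over the cores. I also need $b$ and its derivatives to be locally bounded. I would get this from the explicit Gaussian-integral form of $b$, checking that for fixed $P>0$ the denominator $P-P_T$ stays away from the critical exponent.

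For (iii), I would evaluate the risk-sensitive criterion \eqref{eq:static-value} at the suboptimal control $u_0$. This gives $V^\eps \le \eps^{-1}\log\E_t[e^{\eps G_T^2}]$. Lemma~\ref{lem:pathwise} bounds $G_T^2$ pathwise by \eqref{eq:ceiling}, so the exponential moment is finite for every $\eps$. The cumulant expansion $\eps^{-1}\log\E e^{\eps Z} = \E Z + \tfrac{\eps}{2}\Var Z + O(\eps^2)$ holds with a remainder controlled by $\E[Z^3 e^{\eps Z}]$, which is finite by the same bound. Here $\E Z = Ag^2$ and $\tfrac12\Var Z = A^2 b g^4$, which gives the claimed upper bound with no verification hypothesis. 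This is the content of Corollary~\ref{cor:onesided}.
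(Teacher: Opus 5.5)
Your proposal is correct and follows the paper's own route. Part (i) comes from the Feynman--Kac/It\^o-isometry argument behind Theorem~\ref{thm:variance} together with the normalization of Appendix~\ref{app:normalization}; part (ii) is the construction of $(\alpha,\gamma,b)$, with the first-order equation produced by the envelope linearization; and part (iii) is exactly Corollary~\ref{cor:onesided}, which evaluates the risk-sensitive criterion at $u_0$ and uses the pathwise bound \eqref{eq:ceiling}.

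One correction to your ``main obstacle.'' Because $b\ge 0$, the two-term ansatz has $B_\theta \ge 2A$ for every $g$. Its residual is therefore $g^2$ times a pole-free rational function of $\eps g^2$ that vanishes to second order, which gives a bound $K_c\,\eps^2 g^6$ uniformly on each whole core $\{\eps g^2\le c\}$. Nothing blows up at the core's edge; the failure of $O(\eps^2)$-uniformity there comes only from $g^6$ itself being of order $(c/\eps)^3$.
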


The coefficient $V_1$ is rigorous: it is defined by its Feynman--Kac representation along the $u_0$-controlled state and identified probabilistically (Theorem~\ref{thm:variance}), and (iii) is a rigorous one-sided bound. The matching \emph{two-sided} bound $|R^\eps| = O(\eps^2)$ is open---it cannot hold uniformly in $g$, since \eqref{eq:sandwich} caps the exact value at $g^2$ while $\eps A^2 b g^4$ is unbounded---and is the one place the first-order theory doesn't close (Section~\ref{sec:verify}).

\subsection{Fragility of the certainty-equivalent policy}
\label{sec:fragility}

The expansion measures how the \emph{optimized} value responds to ambiguity. The \emph{non-robust policy} fares worse: with known drift it has infinite robust cost.

\begin{theorem}[Fragility at $P = 0$]
\label{thm:fragility}
With known drift and $\rho \neq 0$, $g \neq 0$, $t < T$, the De~Franco policy $u_0$ has infinite robust cost at every ambiguity level:
\begin{equation}
\label{eq:fragility}
    J^\theta(u_0) = \theta\log\E_\Prob\!\big[e^{G_T^2/\theta}\big] = +\infty \qquad \text{for all } \theta < \infty.
\end{equation}
\end{theorem}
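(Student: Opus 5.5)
At $P = 0$ the posterior is degenerate: $m_s \equiv \rho$ and $P_s \equiv 0$. The De~Franco feedback \eqref{eq:u0} then reduces to $u_0 = -\rho g/\sigma$ with $h = 1$, since $P\alpha = 0$. The tracking error solves the linear SDE
\[
dG_s = -\rho^2 G_s\,ds - \rho G_s\,d\Wh_s .
\]
This is exactly the geometric equation of Section~\ref{sec:pathwise} with $m \equiv \rho$ and $h \equiv 1$. Its explicit solution is
\[
G_T = g\exp\!\big(-\tfrac32\rho^2\tau - \rho(\Wh_T - \Wh_t)\big), \qquad \tau := T - t > 0.
\]
So $G_T = g\,e^{Z}$, where $Z \sim \N(-\tfrac32\rho^2\tau,\ \rho^2\tau)$ is a nondegenerate Gaussian, because $\rho \neq 0$ and $\tau > 0$. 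The first identity in \eqref{eq:fragility} is the Gibbs principle \eqref{eq:gibbs}, applied conditionally on $\F_t$ to this fixed control. Indeed, \eqref{eq:gibbs} is a statement for arbitrary payoffs in $[0,\infty]$: Donsker--Varadhan holds with value $+\infty$ when the exponential moment diverges. So it suffices to show that $\E_\Prob[e^{G_T^2/\theta}] = +\infty$.

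The divergence is elementary. We have $G_T^2/\theta = (g^2/\theta)\,e^{2Z}$ with $g^2/\theta > 0$. For any $c > 0$ and any Gaussian $Z$ with positive variance $v$ and mean $\mu$,
\[
\E\big[\exp(c\,e^{2Z})\big] = \int \exp\!\Big(c\,e^{2z} - \frac{(z-\mu)^2}{2v}\Big)\frac{dz}{\sqrt{2\pi v}} .
\]
The integrand tends to $+\infty$ as $z \to \infty$, because the double exponential $e^{2z}$ beats the quadratic $z^2$. Concretely, on $\{z \ge z_0\}$ we have $c e^{2z} \ge (z-\mu)^2/v + 1$ for $z_0$ large, so the integrand is at least $e^{(z-\mu)^2/(2v)+1}/\sqrt{2\pi v}$, which is not integrable. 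Hence the expectation is $+\infty$ for every $\theta < \infty$. Multiplying by $\theta > 0$ preserves $+\infty$, since the logarithm of $+\infty$ is $+\infty$.

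Two points need care, and the second is the only real obstacle. First, $u_0$ must be an admissible control in the basic sense of Section~\ref{sec:wealth}. It is: it is adapted, and $\int u_0^2\,ds < \infty$ a.s. because $G$ is continuous. It falls outside the exponential-moment class of Section~\ref{sec:value-def}, which is precisely the content of the theorem. The robust cost $J^\theta(u_0) = \sup_{Q \in \A} J(u_0,Q)$ is still well defined as an element of $(-\infty,+\infty]$.

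Second, the identity $\sup_Q\{\cdot\} = \theta\log\E e^{\cdot/\theta}$ must hold in the infinite case; this is the main obstacle. I would prove it directly by exhibiting finite-entropy measures $Q_n \in \A$ with $J(u_0,Q_n) \to \infty$. Take the truncations $\xi_n := \min(G_T^2, n)$ and the Gibbs tilts $dQ_n/d\Prob \propto e^{\xi_n/\theta}$. These have bounded densities, so they have finite entropy. By the bounded Gibbs identity, and since $G_T^2 \ge \xi_n$,
\[
J(u_0,Q_n) \ge \E_{Q_n}[\xi_n] - \theta\KL(Q_n\Vert\Prob) = \theta\log\E_\Prob[e^{\xi_n/\theta}] .
\]
The right side tends to $+\infty$ by monotone convergence. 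This avoids invoking \eqref{eq:gibbs} beyond its stated integrability class, and it completes the proof.
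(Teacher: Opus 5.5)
Your proof is correct and follows the paper's argument. At $P=0$ you identify $G_T = g\,e^{Z}$ with $Z \sim \N(-\tfrac32\rho^2(T-t), \rho^2(T-t))$, show that the double-exponential Gaussian moment diverges, and realize the value $+\infty$ constructively inside $\A$ through the truncated Gibbs tilts $dQ_n \propto e^{\min(G_T^2,n)/\theta}\,d\Prob$ and monotone convergence; the paper does exactly this alongside its appeal to the $[0,\infty]$-valued Gibbs identity, and your explicit tail bound on the integrand and your remark on the basic admissibility of $u_0$ only make precise what the paper leaves implicit.
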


\begin{proof}
At $P = 0$ the tracking error is $G_T^2 = g^2 e^{2\ell}$ with $\ell \sim \N(-\tfrac32\rho^2(T-t), \rho^2(T-t))$ lognormal; $\E[e^{c\,e^{\text{Gaussian}}}] = +\infty$ for every $c > 0$, since the double exponential dominates every Gaussian density. The first equality in \eqref{eq:fragility} is the Gibbs identity in its unbounded, $[0,\infty]$-valued form \citep{dupuisellis1997}, and the divergence holds in the game form constructively: the truncated tilts $dQ_K \propto e^{\min(G_T^2, K)/\theta}\,d\Prob$ have entropy at most $K/\theta$ and objective at least $\theta \log \E\,e^{\min(G_T^2,K)/\theta} \uparrow +\infty$ by monotone convergence.
\end{proof}

Under ambiguity the non-robust optimum is not just suboptimal. Its multiplicative tracking error has tails too heavy to price, and the inner supremum against it is degenerate, with no worst-case measure of finite objective and a non-normalizable Gibbs tilt. Doing nothing has finite robust cost $g^2$, so the optimality gap is infinite. The robust investor must pull back at large $|g|$ to keep his cost finite; Section~\ref{sec:policy} realizes this retreat through the cubic correction.

Yet learning removes the fragility. At $P > 0$, Lemma~\ref{lem:pathwise} makes $G_T^2$ pathwise bounded, so the robust cost is finite for every $\theta$ and the worst-case measure exists. That measure is a bounded tilt of the scalar Gaussian $m_T$: the adversary drives the realized terminal estimate toward zero, shrinking the signal $m_T$.

The two regimes connect continuously: $\log J^\theta = m^2/P + O(1)$ as $P \to 0+$. The upper bound is the ceiling \eqref{eq:ceiling}; for the lower bound, on the event $\{|m_T| \le \sqrt{P_T}\}$ the loss attains a fixed fraction of the ceiling at a Gaussian probability cost that is only polynomial in $1/P$ inside the exponent.

\begin{corollary}[One-sided value bound]
\label{cor:onesided}
At $P > 0$, the fixed-policy expansion $J^\theta(u_0) = V_0 + \eps V_1 + O(\eps^2)$ is rigorous ($G_T^2$ is a bounded random variable with an entire cumulant generating function), giving the unconditional one-sided bound $V^\theta \le V_0 + \eps V_1 + O(\eps^2)$.
\end{corollary}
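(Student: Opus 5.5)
The plan is to expand the cumulant generating function of the single bounded random variable $L := G_T^2$ along the fixed feedback $u_0$, and then pass from the fixed policy to the infimum. By the Gibbs identity \eqref{eq:gibbs} in conditional form, $J^\theta(u_0) = \theta\log\E_\Prob[e^{\eps L}\mid\F_t]$ with $\eps = 1/\theta$. Since the policy is fixed, $L$ does not depend on $\eps$, so this is simply $\eps^{-1}K_L(\eps)$, where $K_L(s) := \log\E_t[e^{sL}]$ is the cumulant generating function of $L$.

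\textbf{Step 1: boundedness.} At $P > 0$, the ceiling \eqref{eq:ceiling} gives $0 \le L \le c := g^2\frac{1+2z}{1+z}\exp\!\big(\frac{m^2}{P(1+2z)}\big)$ almost surely. This constant is deterministic given $(t,g,m,P)$. Because $L$ is bounded, $s\mapsto\E_t[e^{sL}]$ is entire and strictly positive on the real axis. Hence $K_L$ is real-analytic near $0$, with $K_L(s) = s\,\E_t L + \tfrac{s^2}{2}\Var_t L + O(s^3)$.

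\textbf{Step 2: explicit remainder.} I will make the $O(s^3)$ term uniform using Taylor's theorem with the Lagrange remainder. We have $K_L''' (s) = \E_{Q_s}[(L-\E_{Q_s}L)^3]$ under the tilt $dQ_s \propto e^{sL}\,d\Prob$. Since $L\in[0,c]$ under every tilt, $|K_L'''|\le c^3$. Therefore
\[
\Big|J^\theta(u_0) - \E_t L - \tfrac{\eps}{2}\Var_t L\Big| \le \tfrac{\eps^2}{6}c^3 .
\]
Identifying $\E_t L = V_0$ via \eqref{eq:V0} and $\tfrac12\Var_t L = V_1$ via Theorem~\ref{thm:variance} gives the fixed-policy expansion. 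Moreover, the constant is explicit and locally uniform in $(t,m,g)$ for $P$ bounded away from $0$.

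\textbf{Step 3: one-sided value bound.} The value satisfies $V^\theta = \inf_u J^\theta(u)$. I need $u_0$ to be admissible, meaning $\E_\Prob[e^{\delta X_T^2}]<\infty$ for some $\delta>1/\theta$. This holds because $X_T = w + G_T$ is bounded, so every exponential moment is finite. Then $V^\theta\le J^\theta(u_0)$, and Step 2 yields the claim.

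\textbf{Main obstacle.} I expect the only delicate point to be admissibility and the conditional form of the Gibbs identity. Both reduce to the pathwise bound of Lemma~\ref{lem:pathwise}, which requires $P>0$. The remainder constant $c^3$ blows up like $e^{3m^2/P}$ as $P\to0+$, consistent with Theorem~\ref{thm:fragility}. So the statement must be restricted to $P>0$, as it is.
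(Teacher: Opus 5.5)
Your proposal is correct and follows the paper's own route. You bound $G_T^2$ by the ceiling \eqref{eq:ceiling} from the pathwise identity, expand the fixed-policy cost $\eps^{-1}\log\E_t[e^{\eps G_T^2}]$ as a cumulant generating function with $V_0=\E_t G_T^2$ and $V_1=\tfrac12\Var_t G_T^2$, and conclude with $V^\theta\le J^\theta(u_0)$ after checking that $u_0$ is admissible. Your explicit Lagrange remainder $\eps^2 c^3/6$, with $c\propto g^2$, is a useful quantitative version of the paper's $O(\eps^2)$ and agrees with the $O\big(\eps^2(1+g^6)\big)$ scaling stated in Theorem~\ref{thm:expansion}.
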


Fragility of the certainty-equivalent policy is a feature of the known-drift ($P=0$) case, not of ambiguity. Pathwise, estimation risk regularizes model ambiguity here rather than compounding it.

% =============================
% Section 6: Quartic Coefficient
% =============================
\section{The Quartic Coefficient: Closed Form and Structure}
\label{sec:coefficient}

This section gives $b$ in closed form, identifies $2b+1$ as a kurtosis-type ratio, shows $b$ lies in no finite polynomial family, and establishes unconditional finiteness.

\subsection{No finite polynomial closure}
\label{sec:no-closure}

Unlike De~Franco's value coefficient, $b$ admits no finite polynomial form in the belief: Bayesian updating leaves it in no such family. The equation \eqref{eq:b-pde} has a quadratic potential $4k^2 m^2 b$ that raises polynomial degree by two; no other term cancels the resulting leading monomial.

\begin{theorem}[No finite polynomial closure]
\label{thm:no-closure}
For $P > 0$, $b$ lies in no finite-dimensional polynomial family in $m$. (No auxiliary hypothesis is needed: $k = 1 + P\alpha \ge \tfrac12$ never vanishes.)
\end{theorem}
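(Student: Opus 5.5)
Argue by contradiction using the leading-degree balance in \eqref{eq:b-pde}. Suppose that on some time interval $b(t,\cdot,P)$ is a polynomial in $m$ of degree at most $N$ for each $t$, with coefficients $c_j(t)$ that are differentiable in $t$; here $P=P_t$ is deterministic along the flow, so the $b_t - P^2 b_P$ combination is just a total time derivative of the coefficients. Let $n$ be the actual degree of $b$ at some $t<T$ where $b\not\equiv 0$. Such a $t$ exists: the source $2k^2m^2$ is strictly positive for $m\neq0$, so $b\equiv 0$ would force $2k^2m^2\equiv0$. This contradicts $k\ge\tfrac12$, which holds by the remark after \eqref{eq:b-pde} since $r=P\alpha\in(-\tfrac12,0]$.

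Next, collect the coefficient of $m^{n+2}$ in \eqref{eq:b-pde}, assuming $n\ge 1$ so that $n+2>2$. The pieces contribute as follows:
\begin{itemize}
\item The terms $b_t$, $-P^2b_P$ and $\tfrac{P^2}{2}b_{mm}$ have degree at most $n$.
\item The drift $-4Pkm\,b_m$ has degree at most $n$.
\item The source $2k^2m^2$ has degree $2$.
\item The potential $4k^2m^2 b$ has exact degree $n+2$, with coefficient $4k^2c_n(t)$.
\end{itemize}
Hence $4k^2c_n(t)=0$, and since $k\neq0$ we get $c_n(t)=0$, contradicting the definition of $n$.

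The remaining case is $n=0$. Then $b=c_0(t)$, and the $m^2$ coefficient gives $4k^2c_0+2k^2=0$, so $c_0=-\tfrac12$. The $m^0$ coefficient gives $\dot c_0=0$. These two conclusions together contradict the terminal condition $b(T)=0$, since $b(T)=0$ forces $c_0(T)=0$ while $c_0\equiv-\tfrac12$ is constant. So no finite polynomial degree is possible on any strip containing $T$.

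The main subtlety is making ``finite-dimensional polynomial family'' precise. I would take it to mean a bounded degree $N$ uniform over an interval of $t$, and let $n(t)$ be the maximal degree attained on that interval. If the degree varied with $t$, I would pick $t_0$ where the degree is maximal and apply the coefficient identity at $t_0$; this is legitimate because the identity is pointwise in $t$ once $b$ is a polynomial in $m$ with $C^1$ coefficients, and the only time derivative appears at degree $\le n$. If the family is meant as a span of arbitrary fixed functions rather than polynomials, the statement would need the same degree-raising argument applied to the top monomial. I expect no real obstacle beyond bookkeeping. The one thing to check carefully is that the $m$-dependence of $k=k(t,P)$ is absent: $k=1+P\alpha(t)$ depends only on $(t,P)$, so multiplying by $k^2$ does not change polynomial degree in $m$.
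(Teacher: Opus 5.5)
Your proposal is correct and follows the paper's proof. The potential $4k^2m^2b$ creates an unmatched top monomial $4k^2c_n m^{n+2}$, which forces $c_n=0$ because $k\ge\tfrac12$; in the constant case the source forces $c_0\equiv-\tfrac12$, contradicting $b(T)=0$. Your added care — taking the maximal degree over the time interval so that $\dot c_{n+2}\equiv 0$, and using the source to rule out $b\equiv 0$ — only makes explicit what the paper's uniform degree-$N$ ansatz leaves implicit.
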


\begin{proof}
A degree-$N$ ansatz $b = \sum_{j=0}^N c_j(t)m^j$ produces, through $4k^2 m^2 b$, a unique degree-$(N{+}2)$ term $4k^2 c_N m^{N+2}$ that nothing else matches; since $k \neq 0$, $c_N \equiv 0$. (For $N = 0$ the source itself has degree $2 = N + 2$ and coefficient matching forces $c_0 \equiv -\tfrac12$, contradicting the terminal condition $b(T) = 0$; for $N \ge 1$ the leading monomial is unmatched.)
\end{proof}

At $P = 0$ the statement is vacuous: $b$ is a function of time alone. So the polynomial obstruction is Bayesian in origin and distinct from the failure of closure: it arises only when $P>0$. It replaces the closure of De~Franco's \emph{value} coefficient; the robust quartic coefficient belongs to a different class.

\subsection{The telescope and the kurtosis identity}
\label{sec:telescope}

The Feynman--Kac representation of $b$ collapses to a single terminal kernel because the source is exactly half the potential. Introduce the belief seen under the quartic tilt,
\[
    d\widetilde{m}_r = -4P_r k_r \widetilde{m}_r\,dr + P_r\,d\widetilde{W}_r, \qquad \widetilde{m}_t = m.
\]
By Theorem~\ref{thm:variance},
\[
    V_1 = 2\E_t\!\int_t^T k_s^2 m_s^2 A_s^2 G_s^4\,ds.
\]
The weight $\frac{A_s^2 G_s^4}{A_t^2 g^4}$ factors as a Dol\'eans exponential times $e^{\int_t^s 4k_r^2 m_r^2 dr}$ (Appendix~\ref{app:riccati}). Tilting by it turns the drift of $m$ into $-4Pkm$, giving
\[
    b = \E\!\int_t^T e^{\int_t^s 4k_r^2 \widetilde{m}_r^2 dr}\,2k_s^2 \widetilde{m}_s^2\,ds.
\]
The integrand is $\half\frac{d}{ds}\exp(\int_t^s 4k_r^2 \widetilde{m}_r^2\,dr)$, so it telescopes pathwise:
\begin{equation}
\label{eq:telescope}
    \Psi := 2b + 1 = \E_{t,m}\Big[\exp\!\Big(\int_t^T 4k_r^2 \widetilde{m}_r^2\,dr\Big)\Big],
\end{equation}
with $b < \infty \iff \Psi < \infty$. Combining the telescope with $V_1 = \half\Var_t(G_T^2)$ and $V_1 = A^2 g^4 b$ from Section~\ref{sec:expansion} gives the kurtosis interpretation.

\begin{theorem}[Kurtosis identity]
\label{thm:kurtosis}
Under the reference measure, along $u_0$,
\begin{equation}
\label{eq:kurtosis}
    \Psi = 2b + 1 = \frac{\E_t[(X^\star_T - w)^4]}{\big(\E_t[(X^\star_T - w)^2]\big)^2},
\end{equation}
the normalized fourth moment (a kurtosis-type ratio) of the De~Franco tracking error, $g$-independent.
\end{theorem}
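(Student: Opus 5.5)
The identity is algebraic once three facts already in hand are combined: the variance form $V_1 = \half\Var_t[G_T^2]$ (Theorem~\ref{thm:variance}), the normalization $V_1 = A^2 b\,g^4$ with $b$ solving \eqref{eq:b-pde}, and the baseline $V_0 = \E_t[G_T^2] = A g^2$ from \eqref{eq:V0}. Here $G_T = X^\star_T - w$ along $u_0$.

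First, I expand the variance:
\[
    \half\big(\E_t[G_T^4] - (\E_t[G_T^2])^2\big) = A^2 b\, g^4 .
\]
Dividing by $\half(\E_t[G_T^2])^2 = \half A^2 g^4$, which is positive since $g \neq 0$ and $A = e^{\alpha m^2+\gamma} > 0$, gives
\[
    \frac{\E_t[G_T^4]}{(\E_t[G_T^2])^2} - 1 = 2b ,
\]
which is \eqref{eq:kurtosis}. The equality with the telescoped form \eqref{eq:telescope} is then just the definition $\Psi := 2b+1$, so no separate Feynman--Kac computation is needed; the telescope serves only as a cross-check. For $g$-independence I use $G_T = g M_T$ with $M$ independent of $g$: the ratio becomes $\E_t[M_T^4]/(\E_t[M_T^2])^2$ and $g$ cancels. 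At $g = 0$ the identity is read via the $g$-free expression.

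The main obstacle is making sure each ingredient is finite and legitimately identified, so the manipulation is not between infinities. At $P > 0$, Lemma~\ref{lem:pathwise} and the ceiling \eqref{eq:ceiling} make $G_T^2$ bounded, so all moments are finite. The Feynman--Kac identification of $V_1 = A^2 b g^4$ with the solution of \eqref{eq:b-pde} then rests on the true-martingale property of $V_0$ along $u_0$, already noted before Theorem~\ref{thm:variance}.

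At $P = 0$, $G_T^2 = g^2 e^{2\ell}$ is lognormal and has finite fourth moment. Both sides can then be computed explicitly: the ratio is $e^{\Var(2\ell)} = e^{4\rho^2(T-t)}$. This agrees with \eqref{eq:telescope} at $k=1$, $\widetilde m \equiv \rho$, giving an independent check of the normalization constants, in particular the factor $2$ in the source $2k^2m^2$. I would include this check explicitly, since a slip in that constant is the only way the algebra could go wrong.
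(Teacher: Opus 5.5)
Your proof is correct and is the paper's argument. The paper likewise combines $V_1=\tfrac12\Var_t(G_T^2)$ (Theorem~\ref{thm:variance}), the normalization $V_1=A^2bg^4$ and $V_0=\E_t[G_T^2]=Ag^2$, and divides by $\tfrac12A^2g^4$; it cites the telescope \eqref{eq:telescope} only so that the same ratio is also identified with $\E_{t,m}[\exp(\int_t^T 4k_r^2\widetilde m_r^2\,dr)]$. One small misattribution does not affect the result: the true-martingale property of $V_0$ identifies the Feynman--Kac $V_1$ with the variance, whereas the paper matches $b:=V_1/(A^2g^4)$ to the closed-form solution of \eqref{eq:b-pde} through the telescope rather than PDE uniqueness (the potential $4k^2m^2$ is unbounded above), and the identity itself needs $b$ only as the normalized coefficient.
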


The robustness premium is a kurtosis-type ratio $\frac{\E G^4}{(\E G^2)^2}$ of the non-robust loss (under the reference measure along $u_0$); it is the raw normalized fourth moment, not the centered excess kurtosis.

\subsection{Closed form and unconditional non-explosion}
\label{sec:nonexplosion}

The auxiliary process is affine, so $\Psi = e^{R_0 m^2 + L_0}$ with $(R_0, L_0)$ solving the backward Riccati system
\begin{equation}
\label{eq:riccati-system}
    \dot R_0 = -2P^2 R_0^2 + 8PkR_0 - 4k^2, \qquad \dot L_0 = -P^2 R_0, \qquad R_0(T) = L_0(T) = 0,
\end{equation}
along the Kalman characteristic. The system's finite-time non-explosion holds unconditionally; this is the integrability condition behind the representation. Writing $\Delta_t := 2P_t - P_T$ (so that $k = P/\Delta$) and $\xi := \Delta R_0$, the Riccati equation becomes separable:
\begin{equation}
\label{eq:Y-flow}
    \dot \xi = -\frac{2P^2}{\Delta}\,(\xi - 1)(\xi - 2), \qquad \xi(T) = 0.
\end{equation}
Backward from $\xi(T) = 0$ the flow increases but is trapped below the fixed point $\xi = 1$, with no blowup for any prior, horizon, or belief. Integrating,
\begin{equation}
\label{eq:R0L0}
    R_0(t;T) = \frac{2(\Delta_t - P_T)}{\Delta_t(2\Delta_t - P_T)} \in \Big[0, \tfrac{1}{\Delta_t}\Big), \qquad
    e^{L_0} = \frac{\Delta_t}{\sqrt{P_T(2\Delta_t - P_T)}} \ge 1,
\end{equation}
and hence, in the dimensionless $z = P(T-t)$,
\begin{equation}
\label{eq:b-closed}
    2b + 1 = \frac{1+2z}{\sqrt{1+4z}}\,\exp\!\Big(\frac{4z(1+z)}{(1+2z)(1+4z)}\cdot\frac{m^2}{P}\Big).
\end{equation}

\begin{theorem}[Closed form, unconditional finiteness]
\label{thm:closed-form}
The normalized quartic coefficient is $b = \half(\Psi - 1)$ with $\Psi$ as in \eqref{eq:b-closed}; it is finite and nonnegative for all finite horizons and all belief states, with explicit kernel exponents $R_0, L_0$ closed form in the posterior variance.
\end{theorem}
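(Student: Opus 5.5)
The plan is to bypass the Riccati system at first and compute $\Psi$ directly from two earlier results: the kurtosis identity (Theorem~\ref{thm:kurtosis}) and the pathwise identity (Lemma~\ref{lem:pathwise}). Afterwards I will check that the resulting expression agrees with $e^{R_0 m^2 + L_0}$ for the stated $R_0, L_0$.

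\textbf{Step 1: reduce $\Psi$ to a ratio of Gaussian integrals.} By Lemma~\ref{lem:pathwise}, for $P>0$,
\[
G_T^2 = g^2 c\, e^{-m_T^2/P_T}, \qquad c := \frac{1+2z}{1+z}\,e^{hm^2/P},
\]
where $c$ is deterministic given $\F_t$ and $m_T \sim \N(m, v)$ with $v := P - P_T$. Hence
\[
\Psi = \frac{\E[G_T^4]}{(\E[G_T^2])^2} = \frac{\E\, e^{-2m_T^2/P_T}}{\big(\E\, e^{-m_T^2/P_T}\big)^2}.
\]
The constant $c$ and the factor $g$ cancel, which confirms the $g$-independence.

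\textbf{Step 2: evaluate the integrals.} For $X\sim\N(m,v)$ and $a>0$, the standard formula is
\[
\E\, e^{-aX^2} = (1+2av)^{-1/2}\exp\!\Big(-\frac{am^2}{1+2av}\Big).
\]
It is finite for every $a>0$, and this is where unconditional finiteness comes from: the exponent is negative, so no blowup is possible for any prior, horizon or belief.

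Since $P_T = P/(1+z)$, we have $v/P_T = z$. Taking $a = 2/P_T$ in the numerator and $a = 1/P_T$ in the denominator gives the prefactor
\[
\frac{1+2z}{\sqrt{1+4z}}
\]
and the exponent
\[
\frac{2m^2}{P_T}\Big(\frac{1}{1+2z} - \frac{1}{1+4z}\Big) = \frac{4z(1+z)}{(1+2z)(1+4z)}\cdot\frac{m^2}{P}.
\]
This is exactly \eqref{eq:b-closed}.

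\textbf{Step 3: nonnegativity.} The bound $\Psi \ge 1$, equivalently $b \ge 0$, follows from Cauchy--Schwarz, $\E[G^4]\ge(\E[G^2])^2$. It also follows from the explicit formula, since $(1+2z)^2 \ge 1+4z$ and the exponent is nonnegative.

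\textbf{Step 4: consistency with the Riccati system.} Use $\Delta_t = 2P_t - P_T = P_T(1+2z)$. I will substitute to check that $R_0 m^2 + L_0$, with $R_0, L_0$ as in \eqref{eq:R0L0}, reproduces the exponent and the prefactor of \eqref{eq:b-closed}. This uses $2\Delta - P_T = P_T(1+4z)$ and $\Delta - P_T = 2zP_T$.

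I will then verify \eqref{eq:Y-flow} directly. With $\xi = \Delta R_0 = 2(\Delta-P_T)/(2\Delta-P_T)$, the range $\xi \in [0,1)$ gives $R_0 \in [0, 1/\Delta)$. Differentiating in $t$, using $\dot\Delta = -2P^2$, should give $\dot\xi = -\tfrac{2P^2}{\Delta}(\xi-1)(\xi-2)$, with $\xi(T)=0$ since $\Delta_T = P_T$. This step is routine algebra. It is only needed to tie the closed form to the affine Feynman--Kac representation \eqref{eq:telescope}; the Gaussian computation already proves the formula.

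\textbf{Main obstacle.} The main care point is that the whole argument leans on the kurtosis identity (Theorem~\ref{thm:kurtosis}), which in turn rests on the telescope \eqref{eq:telescope}. The telescope requires that the Girsanov tilt by the quartic weight is a true martingale change of measure. I will take that identity as given, as the paper has already stated it.

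A second point is the known-drift case $P=0$. There \eqref{eq:b-closed} must be read as the limit $P\to0+$: one has $z\to0$ but $zm^2/P \to \rho^2(T-t)$, so the formula tends to $e^{4\rho^2(T-t)}$. I would confirm this directly from the lognormal law of Theorem~\ref{thm:fragility}, since $\E e^{4\ell}/(\E e^{2\ell})^2 = e^{4\rho^2(T-t)}$, which is finite.
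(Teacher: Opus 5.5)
Your proof is correct, but it follows a genuinely different route from the paper. The paper starts from the telescoped Feynman--Kac representation \eqref{eq:telescope}, in which $\Psi$ is an exponential moment of the affine process $\widetilde m^2$ under the quartic tilt. It reads off $\Psi = e^{R_0 m^2 + L_0}$ from the Riccati system \eqref{eq:riccati-system} and gets non-explosion from the trap of $\xi = \Delta R_0$ below $1$ in \eqref{eq:Y-flow}. Finiteness comes from the nonnegative local martingale $N$ being a supermartingale, and equality from a true-martingale argument (Novikov on short subintervals).

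You instead combine the kurtosis ratio with Lemma~\ref{lem:pathwise} to reduce $\Psi$ to a ratio of two scalar Gaussian integrals $\E\,e^{-a m_T^2}$ with $a>0$. These lie in $(0,1]$, so finiteness is automatic, and the closed form comes out of a single completion of the square. Your arithmetic ($v/P_T = z$, the prefactor, the exponent) is right. So is your check that the stated $R_0, L_0$ satisfy the $\xi$-flow and $\dot L_0 = -P^2 R_0$.

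Your route is also more elementary than you credit, because the kurtosis identity does not rest on the telescope. Theorem~\ref{thm:variance} gives $V_1 = \half\Var_t(G_T^2)$, and $V_0 = \E_t G_T^2$. Hence $2b+1 = 1 + 2V_1/V_0^2$ already equals $\E_t G_T^4/(\E_t G_T^2)^2$. You never need the quartic tilt to be a true martingale; that concern belongs only to the paper's route.

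What the paper's route buys is structure. It derives, rather than verifies, the Riccati kernel and the trapped variable $\xi_0$, which is reused in the policy correction ($1-\xi_0 = 1/(1+4z)$ in \eqref{eq:u1}). Its representation also stays valid at $P=0$, where $\widetilde m \equiv \rho$. Your argument needs the pathwise identity and hence $P>0$, so the known-drift case must be handled separately by the lognormal computation, as you do.
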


Finiteness follows without a PDE argument: the process $N_r := e^{\int 4k^2 \widetilde{m}^2}e^{R_0 \widetilde{m}_r^2 + L_0}$ is a nonnegative local martingale, hence a supermartingale, and bounds $\Psi$. The equality $\Psi = e^{R_0 m^2 + L_0}$ is the standard exponential-moment formula for the affine (CIR-type) process $\widetilde{m}^2$, valid up to the Riccati explosion, which the trap in \eqref{eq:Y-flow} rules out. The true-martingale property of $N$ follows by Novikov on subintervals where $\int 4k^2\,dr$ is small, pasted along the partition. We identify $b$ with the variance-defined object of Section~\ref{sec:expansion} through the telescope, never through PDE uniqueness for a potential unbounded above. This non-explosion is specific to the first-order kernel; it doesn't imply global well-posedness of the exact robust equation, which is open.

\subsection{Asymptotics and checks}
\label{sec:asymptotics}

As $T \to \infty$ at fixed belief, $R_0 m^2 \to m^2/(2P)$ and $e^{L_0} \sim \sqrt{P_t/P_T} \sim \sqrt{P_t(T-t)}$, so
\begin{equation}
\label{eq:sqrtT}
    b \sim \half\sqrt{P_t(T-t)}\;e^{m^2/2P_t}:
\end{equation}
with learning, the long-horizon premium grows like $\sqrt{T}$ rather than exponentially. With known drift it would grow like $e^{4\rho^2 T}$; the bound above is the quantitative form of that slowing. At $t \to T$, $\Psi \to 1$ and $b \to 0$ at rate $b \sim 2k^2 m^2(T-t)$. At zero current estimate $m = 0$, $b(t,0) = \half(e^{L_0} - 1) > 0$, the pure learning-ambiguity premium.

Known-drift singular limit ($P_0 \to 0$): $R_0 m^2 \to 4\rho^2(T-t)$, $L_0 \to 0$, so $b \to \half(e^{4\rho^2(T-t)} - 1)$ and $A^2 b = \sinh(2\rho^2(T-t))$. This recovers the elementary result (Appendix~\ref{app:known-drift}), and it matches the independent lognormal computation $\frac{\E G^4}{(\E G^2)^2} = e^{4\rho^2(T-t)}$.

The quartic coefficient is therefore closed form, equals a kurtosis-type ratio of the non-robust loss, sits in no finite polynomial family, and stays finite for every horizon and belief, growing only like $\sqrt{T}$ under learning. With the premium pinned down, we turn from its size to its effect on the policy.

% =============================
% Section 7: Policy and Dichotomy
% =============================
\section{Policy Correction and the Dichotomy}
\label{sec:policy}

The robust correction is cubic and always opposes the non-robust position, so the investor reduces his largest positions. The whole shape change belongs to the constant multiplier: under Maenhout's value-scaling the policy stays affine, and the two normalizations are exactly distinguished.

\subsection{The cubic correction and its universal sign}
\label{sec:cubic}

Expanding the optimal feedback $u^\eps = -C_\theta/(\sigma B_\theta)$ and inserting the closed form gives the policy correction explicitly (Appendix~\ref{app:control}). The non-robust feedback is linear, $u_0 = -(mh/\sigma)g$; the correction is cubic. Using $2b+1 = \Psi$ and $b_m = R_0 m\Psi$, the bracket simplifies to $m\Psi(k - PR_0) = m\Psi\,k(1 - \xi_0)$. Here $\xi_0 := \Delta_t R_0(t) \in [0, 1)$ is the trapped Riccati variable of \eqref{eq:Y-flow}, and $1 - \xi_0 = \frac{P_T}{4P_t - 3P_T} = \frac{1}{1 + 4z}$, so
\begin{equation}
\label{eq:u1}
    u_1 = \frac{2Akm\Psi}{\sigma}\cdot\frac{P_T}{4P_t - 3P_T}\,g^3, \qquad
    u_0\,u_1 = -\frac{2Akh\Psi}{\sigma^2}\cdot\frac{P_T}{4P_t - 3P_T}\,m^2 g^4 \le 0.
\end{equation}

\begin{theorem}[Universal sign]
\label{thm:sign}
The first-order robust correction opposes the non-robust position, $u_0 u_1 \le 0$, strictly unless $m = 0$ or $g = 0$.
\end{theorem}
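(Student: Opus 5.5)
The plan is to read the sign directly off the explicit product in \eqref{eq:u1} and then check that each factor has a definite sign. First, derive $u_1$ by expanding $u^\eps = -C_\theta/(\sigma B_\theta)$ to first order in $\eps$ around $V_0 = Ag^2$, with $V_1 = A^2 b g^4$. Write $B_\theta = B_0 + \eps B_1$ and $C_\theta = C_0 + \eps C_1$, with $B_0 = 2A$ and $C_0 = 2Amhg$ (using $P(V_0)_{xm} = 4PA\alpha m g$). Then
\[
u_1 = -\frac{1}{\sigma}\Big(\frac{C_1}{B_0} - \frac{C_0 B_1}{B_0^2}\Big).
\]
Here $B_1 = (V_1)_{xx} + (V_0)_x^2$ and $C_1 = m(V_1)_x + P(V_1)_{xm} + P(V_0)_x(V_0)_m$. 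Every term is $g^3$ times a function of $(t,m)$, because $V_1 \propto g^4$ and $V_0 \propto g^2$.

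Next, substitute $b = \half(\Psi - 1)$ together with $b_m = R_0 m \Psi$ and $A_m = 2\alpha m A$. The constant pieces coming from the ``$-1$'' in $b$ and from $(V_0)_x^2$ should cancel against the $B_1$ term; this is the bookkeeping in Appendix~\ref{app:control}. What remains is the bracket $m\Psi(k - PR_0)$. Rewrite it as $m\Psi k(1 - \xi_0)$, using $k = P/\Delta$ and $\xi_0 = \Delta R_0$.

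Then evaluate $1 - \xi_0$ from the closed form \eqref{eq:R0L0}. We have
\[
\xi_0 = \frac{2(\Delta - P_T)}{2\Delta - P_T},
\]
so $1 - \xi_0 = P_T/(2\Delta - P_T) = P_T/(4P_t - 3P_T)$. This equals $1/(1+4z)$ after substituting $P_T = P/(1+z)$.

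The sign check is then immediate. Multiplying $u_0 = -(mh/\sigma)g$ by $u_1$ gives the displayed $u_0 u_1$, a product of $-1$ with the following factors:
\begin{itemize}
\item $A = e^{\alpha m^2 + \gamma} > 0$;
\item $k \in (\tfrac12, 1]$;
\item $h = 1 + 2P\alpha = 2k - 1 > 0$, since $P\alpha \in (-\tfrac12, 0]$;
\item $\Psi \ge 1$, from \eqref{eq:b-closed} or the telescope \eqref{eq:telescope};
\item $P_T/(4P_t - 3P_T) = 1/(1+4z) \in (0,1]$;
\item $\sigma^2 > 0$ and $m^2 g^4 \ge 0$.
\end{itemize}
Hence $u_0 u_1 \le 0$. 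Equality holds iff $m^2 g^4 = 0$, that is, iff $m = 0$ or $g = 0$. At $P = 0$ the same conclusion holds with $k = h = 1$ and $1 - \xi_0 = 1$.

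The main obstacle is the cancellation in the first step. One has to confirm that the mixed derivative $P(V_1)_{xm}$ and the $-C_0 B_1/B_0^2$ term combine exactly into $k - PR_0$, with no residual term of indefinite sign such as an $\alpha m^2$ piece. I would verify this by writing $V_1 = A^2 b g^4$, collecting the $g^3$ coefficient, and using the identity $h = 2k - 1$ to absorb the $\alpha$-terms. If a stray term survives, I would instead try to show its sign agrees with $k(1 - \xi_0)$, using $\xi_0 \in [0,1)$ from the trapping argument for \eqref{eq:Y-flow}.
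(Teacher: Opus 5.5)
Your proposal is correct and follows the paper's own route: expand $-C_\theta/(\sigma B_\theta)$ as in Appendix~\ref{app:control}, reduce the bracket to $m\Psi k(1-\xi_0)$ with $1-\xi_0 = P_T/(4P_t-3P_T) = 1/(1+4z)$, and read the sign off a product of positive factors. The cancellation you flag does close exactly. Collecting the $g^3$ coefficient gives $-4A^3[km(2b+1) - Pb_m]$ via $1+4r-3h=-2k$ and $r-h=-k$, with the $(V_0)_x^2$ and $P(V_0)_x(V_0)_m$ pieces supplying the ``$+1$'' in $2b+1$. So no stray term of indefinite sign survives, and your fallback is unnecessary.
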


Every factor is positive ($A, k, h, \Psi > 0$, and $4P_t - 3P_T > 0$ since $P$ decreases), so the sign is an identity rather than a first-order estimate. (For $P > 0$; at $P = 0$ the factor $1/(1+4z)$ is read as its continuous limit $1$, matching the direct computation of Appendix~\ref{app:known-drift}.) When the non-robust policy says ``go long,'' the correction says ``go less long''; the shrinkage grows cubically in tracking error. The term $km\Psi$ is the wealth-channel pullback: the loss variance rises with exposure. The term $-PR_0 m\Psi$ is a learning-channel lean-in, since the premium's belief-gradient favors positions that hedge it. The trap $PR_0 \le k$ guarantees the pullback always wins, though only by the explicit margin $\frac{P_T}{4P_t - 3P_T}$, which is small early in learning. The relative correction is $\eps|u_1/u_0| = 2A\eps\,\tfrac{k}{h}\Psi\,\tfrac{P_T}{4P_t - 3P_T}\,g^2$, so the relative correction is largest per unit tracking error \emph{late} in the learning cycle, when posterior uncertainty is small.

The worst-case adversary, to leading order, is
\begin{equation}
\label{eq:phi-worst}
    \varphi^{\star,\eps} = \eps\Sigma_0 + O(\eps^2) = -2\eps mAk\,g^2:
\end{equation}
quadratic in tracking error, sign opposite the belief. Nature dampens the drift the investor trusts, hardest when he's far from target. The induced first-order filter drift is $P\varphi^\star = -2\eps PmAk g^2 \le 0$ for $m > 0$, so nature biases the filter toward zero conviction, the dual-channel mechanism acting through both terms. Both $u_1$ and $\varphi^\star$ grow in $g$. The exact adversary is bounded in $g$ under Assumption~\ref{assump:G} (Section~\ref{sec:verify}), so these are $\{\eps g^2 \lesssim 1\}$ asymptotics, and verification bounds use truncated feedbacks.

\subsection{The value-scaled normalization preserves everything}
\label{sec:maenhout}

The cubic correction invites comparison with Maenhout's robust portfolio choice. In Maenhout's known-drift CRRA problem, robustness is observationally equivalent to higher risk aversion. The policy rescales by a constant, and no new functional form appears. That homotheticity is not automatic: \citet{maenhout2004} obtains it by scaling the entropy penalty by the value function, keeping the HJB homogeneous. The constant multiplier used throughout this paper makes no such normalization, and the cubic correction is a consequence of that choice, present already at $P = 0$, where there's nothing to learn (Theorem~\ref{thm:obstruction}).

Replace the constant penalty $\tfrac{\theta}{2}\varphi^2$ by the value-scaled penalty $\tfrac{1}{2\bar\theta}V\varphi^2$, the mean--variance analogue of Maenhout's normalization, well-posed here because $V \ge 0$. As in \citet{maenhout2004}, the value-scaled problem is \emph{defined} recursively through its HJB equation. The penalty is not $\theta$ times a relative entropy for any fixed $\theta$, so the static duality of Section~\ref{sec:embedding} does not apply, and ``optimal'' in this subsection is in that (Isaacs) sense. The corresponding verification would be routine here, the equilibrium distortion being bounded and $g$-independent (as in Section~\ref{sec:single-root}), and we don't pursue it. Optimizing nature then the investor gives the reduced equation that is \eqref{eq:reduced-pde} with $\eps$ replaced by the state-dependent ratio $\bar\theta/V$. Dividing by $V = Ag^2$ lowers the degree in $g$ of every robustness term by two, so the quartic-generating terms collapse and the quadratic ansatz closes.

\begin{proposition}[Value-scaled closure, constructive and unconditional]
\label{prop:value-scaled}
With $\bar s := 1 + 2\bar\theta$ and $\bar c := 2(1 + \bar\theta)$, the value-scaled robust equation is solved exactly by $V^{\bar\theta} = A^{\bar\theta}(t,m,P)\,g^2$, $A^{\bar\theta} = e^{\bar\alpha m^2 + \bar\gamma}$, where
\begin{equation}
\label{eq:value-scaled-coeffs}
    \bar\alpha = -\frac{T-t}{\bar s + \bar c\,z}, \qquad
    \bar\gamma = \frac{\bar s}{\bar c}\log\!\Big(1 + \tfrac{\bar c}{\bar s}z\Big) - \log(1+z), \qquad z = P(T-t),
\end{equation}
and the optimal feedback is affine in tracking error,
\begin{equation}
\label{eq:value-scaled-policy}
    u^{\bar\theta} = -\frac{m\,\big(1 + 2(1+\bar\theta)P\bar\alpha\big)}{\sigma\,\bar s}\,g,
\end{equation}
for every belief state and every $\bar\theta$. At $\bar\theta = 0$ the system reduces to De~Franco.
\end{proposition}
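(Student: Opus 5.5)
The plan is to derive the value-scaled reduced equation, show that it preserves the quadratic-in-$g$ structure, and then check the stated closed forms.

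\textbf{Reduced equation.} Re-run nature's inner optimization in the Isaacs bracket \eqref{eq:isaacs} with the penalty $\tfrac{\theta}{2}\varphi^2$ replaced by $\tfrac{1}{2\bar\theta}V\varphi^2$. The first-order condition gives $\varphi^\star = \bar\theta\,\Sigma/V$, and the achieved value is $\bar\theta\Sigma^2/(2V)$. So the computation leading to \eqref{eq:reduced-pde} goes through verbatim with $\eps$ replaced by $\bar\theta/V$. On $\{g = 0\}$ we have $V = 0$ and $\Sigma = 0$; I would treat it by continuity, since the robustness terms vanish there for the ansatz below.

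\textbf{Quadratic ansatz.} Substitute $V = A g^2$ with $A = A(t,m,P) > 0$, so that $V_x = 2Ag$, $V_{xx} = 2A$, $V_m = A_m g^2$ and $V_{xm} = 2A_m g$. With $\bar\theta/V = \bar\theta/(Ag^2)$, each robustness term loses exactly two powers of $g$:
\[
B = 2A(1+2\bar\theta) = 2A\bar s, \qquad C = 2g\big(mA + (1+\bar\theta)PA_m\big), \qquad \tfrac{\bar\theta}{2V}P^2V_m^2 = \tfrac{\bar\theta P^2 A_m^2}{2A}\,g^2.
\]
Every term of the reduced equation is therefore proportional to $g^2$. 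Dividing by $g^2$ leaves a closed equation for $A$ alone:
\[
A_t + \tfrac{P^2}{2}A_{mm} - P^2A_P + \tfrac{\bar\theta P^2A_m^2}{2A} - \tfrac{(mA + (1+\bar\theta)PA_m)^2}{\bar s A} = 0, \qquad A(T) = 1.
\]
The two convexity requirements hold automatically: $V_{xx} = 2A > 0$, and $B > 0$.

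\textbf{Riccati system.} Insert $A = e^{\bar\alpha m^2 + \bar\gamma}$, using $A_m = 2\bar\alpha m A$ and $A_{mm} = (2\bar\alpha + 4\bar\alpha^2m^2)A$, and divide by $A$. Write $D$ for the total derivative along the Kalman characteristic $\dot P = -P^2$. Matching the coefficients of $m^2$ and $m^0$ gives
\[
D\bar\alpha + \bar c\,P^2\bar\alpha^2 - \tfrac{1}{\bar s}(1 + \bar c P\bar\alpha)^2 = 0, \qquad D\bar\gamma + P^2\bar\alpha = 0,
\]
with $\bar\alpha(T) = \bar\gamma(T) = 0$. I would then verify the stated $\bar\alpha = -(T-t)/(\bar s + \bar c z)$ by direct substitution, using $\dot z = -P - P^2(T-t)$. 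It is natural to work with $1/\bar\alpha$ or with $P\bar\alpha$ as a function of $z$, which should linearize or separate the equation. Next integrate $D\bar\gamma = -P^2\bar\alpha$ from $T$. Writing $P^2\bar\alpha\,dt$ in terms of $dz$ and $d\log P$ should produce the two logarithms in \eqref{eq:value-scaled-coeffs}. Finally I would note that the denominator $\bar s + \bar c z$ is positive, so there is no blow-up for any belief state, horizon, or $\bar\theta \ge 0$.

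\textbf{Policy and consistency check.} The feedback is $u = -C/(\sigma B) = -m(1 + 2(1+\bar\theta)P\bar\alpha)g/(\sigma\bar s)$, which is \eqref{eq:value-scaled-policy}; it is affine in $g$. Setting $\bar\theta = 0$ gives $\bar s = 1$ and $\bar c = 2$, and the Riccati system becomes De~Franco's (Appendix~\ref{app:riccati}); this serves as a check.

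The main obstacle is the Riccati integration, specifically producing the $\bar\gamma$ formula cleanly. I would handle it by choosing $z$ as the independent variable and checking the claimed closed form by differentiation, rather than deriving it from scratch.
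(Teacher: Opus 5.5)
Your proposal is correct and follows the paper's route. You replace $\eps$ by $\bar\theta/V$ in the reduced equation, observe that the $g^2$-ansatz closes with $V_{xx}=2A>0$ and $B = 2\bar s A > 0$, and check the closed forms against the resulting Riccati pair. The only difference is cosmetic: the paper gets non-explosion from the factorization $\dot r = \tfrac{P}{\bar s}(1+r)(1+\bar c r)$ for $r = P\bar\alpha$ and a phase-line trap, while you read it off the positive denominator $\bar s + \bar c z$. When you do the $\bar\gamma$ check, note that it works because $\bar c - \bar s = 1$, which gives $d\bar\gamma/dz = -z/\big((\bar s + \bar c z)(1+z)\big)$ and hence $D\bar\gamma = -P^2\bar\alpha$ via $\dot z = -P(1+z)$.
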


The construction is literally De~Franco's with $(1, 2)$ replaced by $(\bar s, \bar c)$. The characteristic Riccati factorizes as $\dot r = \tfrac{P}{\bar s}(1 + r)(1 + 2(1+\bar\theta)r)$, with roots $-1$ and $-\tfrac{1}{2(1+\bar\theta)}$. Backward from $r(T) = 0$ the flow is trapped above the larger root, so $r^{\bar\theta} \in (-\tfrac{1}{2(1+\bar\theta)}, 0]$ unconditionally, with no blowup. The saddle denominator $\bar B = 2\bar s A^{\bar\theta} > 0$ never degenerates. The proposition is therefore unconditional. Robustness \emph{tightens} the trap, and $\bar\theta \to \infty$ drives $\bar\alpha \to 0$, so the investor's position shrinks continuously to zero.

The value-scaled penalty degenerates at the target $g = 0$, where $V \to 0$ and nature is nominally unconstrained. The innovation sensitivity $\Sigma$ vanishes there too, $\Sigma^2/V \to 0$, so the equilibrium distortion $\bar\varphi^\star = \bar\theta\Sigma/V$ stays $g$-independent and bounded. The preference it expresses is a fear of misspecification proportional to current failure, so the penalty vanishing at the target is coherent.

\subsection{The dichotomy as an identity}
\label{sec:dichotomy}

Substituting the value-scaled solution into the feedback yields an exact observational equivalence; the constant multiplier instead changes the policy's functional form.

\begin{theorem}[As-if identity]
\label{thm:asif}
For $\bar\theta > 0$, the value-scaled robust policy is the De~Franco policy applied to a worst-case-shrunk signal:
\begin{equation}
\label{eq:asif}
    u^{\bar\theta}(t,x,m,P) = u_0^{\mathrm{DF}}\big(t, x,\, m + \bar\varphi^\star,\, P\big), \qquad
    \bar\varphi^\star = -\frac{2\bar\theta m(1+z)}{\bar s + \bar c\,z},
\end{equation}
with shrinkage factor $s(z, \bar\theta) = \frac{m + \bar\varphi^\star}{m} = \frac{1+2z}{\bar s + \bar c z}$, strictly increasing in $z$, with exact range as $z$ ranges over $[0, \infty)$
\begin{equation}
\label{eq:shrinkage-range}
    s(z, \bar\theta) \in \Big[\frac{1}{1 + 2\bar\theta},\; \frac{1}{1 + \bar\theta}\Big),
\end{equation}
the supremum never attained. Attenuation never reverses the signal: $|\bar\varphi^\star| < |m|$, since $\bar s + \bar c z - 2\bar\theta(1+z) = 1 + 2z > 0$.
\end{theorem}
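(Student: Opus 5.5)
The plan is a direct substitution: write both feedbacks in closed form, match them, and then confirm that the implied shift of the signal is exactly nature's equilibrium distortion. No new analysis should be needed beyond Proposition~\ref{prop:value-scaled}.

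\textbf{Step 1: simplify both policies.} Proposition~\ref{prop:value-scaled} at $\bar\theta = 0$ reduces to De~Franco, so the De~Franco coefficient is $\alpha = -(T-t)/(1+2z)$. Then $P\alpha = -z/(1+2z)$ and
\[
h = 1 + 2P\alpha = \frac{1}{1+2z}, \qquad u_0^{\mathrm{DF}}(t,x,m',P) = -\frac{m'\,g}{\sigma(1+2z)}.
\]
Here it is essential that $\alpha$, and hence $h$, depends only on $(t,P)$ and not on $m$. Substituting the belief $m' = m + \bar\varphi^\star$ therefore changes only the explicit factor $m$.

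Similarly, from \eqref{eq:value-scaled-coeffs} we have $P\bar\alpha = -z/(\bar s + \bar c z)$. This gives $1 + 2(1+\bar\theta)P\bar\alpha = 1 - \bar c z/(\bar s + \bar c z) = \bar s/(\bar s + \bar c z)$. Feeding this into \eqref{eq:value-scaled-policy} gives
\[
u^{\bar\theta} = -\frac{m\,g}{\sigma(\bar s + \bar c z)}.
\]

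\textbf{Step 2: match the two policies.} Equating $u^{\bar\theta}$ with $u_0^{\mathrm{DF}}(t,x,m s,P)$ forces
\[
s = \frac{1+2z}{\bar s + \bar c z}.
\]
Then $\bar\varphi^\star = m(s-1) = m(1 + 2z - \bar s - \bar c z)/(\bar s + \bar c z)$. Inserting $\bar s = 1 + 2\bar\theta$ and $\bar c = 2 + 2\bar\theta$, the numerator is $-2\bar\theta(1+z)$, which is the stated formula \eqref{eq:asif}.

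\textbf{Step 3: identify the shift as nature's distortion.} To justify the name "worst-case-shrunk", I will check that this $\bar\varphi^\star$ equals the equilibrium distortion $\bar\theta\Sigma/V$ of Section~\ref{sec:maenhout}. With $V = A^{\bar\theta} g^2$ we have $V_x = 2A^{\bar\theta} g$ and $V_m = 2\bar\alpha m A^{\bar\theta} g^2$. Hence
\[
\frac{\Sigma}{V} = \frac{2\sigma u^{\bar\theta}}{g} + 2P\bar\alpha m = -\frac{2m(1+z)}{\bar s + \bar c z}.
\]
Multiplying by $\bar\theta$ reproduces \eqref{eq:asif}. I expect this identification to be the only step needing care: one must use the value-scaled first-order condition $\bar\varphi^\star = \bar\theta\Sigma/V$, not the constant-multiplier one \eqref{eq:phi-star}. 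One must also check that the $g$-dependence cancels, which it does because $u^{\bar\theta}$ is linear in $g$.

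\textbf{Step 4: shape of the shrinkage factor.} Differentiating, $\partial_z s$ has the sign of $2(\bar s + \bar c z) - \bar c(1+2z) = 2\bar s - \bar c = 2\bar\theta > 0$, so $s$ is strictly increasing in $z$. At $z = 0$ it equals $1/\bar s = 1/(1+2\bar\theta)$. As $z \to \infty$ it tends to $2/\bar c = 1/(1+\bar\theta)$ but never reaches it, by strict monotonicity. This gives the range \eqref{eq:shrinkage-range}.

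\textbf{Step 5: no sign reversal.} Finally, $|\bar\varphi^\star| < |m|$ follows from the identity $\bar s + \bar c z - 2\bar\theta(1+z) = 1 + 2z > 0$, which is equivalent to $s > 0$.
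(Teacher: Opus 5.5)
Your proposal is correct and takes the paper's route: substitute the closed forms from Proposition~\ref{prop:value-scaled} and Appendix~\ref{app:riccati} into the two feedbacks, using that $h$ depends only on $(t,P)$, then read off $s$ and its monotonicity. Your Step~3 check that the shift equals nature's value-scaled best response $\bar\theta\Sigma/V$ makes explicit a step the paper leaves implicit; the one caveat, which the paper shares, is that $s$ as a ratio and the strict inequality $|\bar\varphi^\star|<|m|$ both need $m\neq 0$, since at $m=0$ both sides vanish.
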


Value-scaled robustness \emph{is} De~Franco trading on a worst-case-attenuated belief, with the affine class preserved for every $\bar\theta$: it deforms the De~Franco slope without introducing a new functional form. It is an as-if \emph{representation} rather than a literal change of prior: no Gaussian prior reproduces the multiplicative shrinkage $s(z_t, \bar\theta)\,m_t$ with $P_t$ unchanged, so the deformation is preference-driven. At $P = 0$ the statement is an observational equivalence in Maenhout's strict sense. Constant-multiplier robustness exits the class through the cubic term; it has no equivalent in the De~Franco family. Larger $z$ (more learning ahead) means \emph{less} shrinkage, and under either normalization, the correction is largest when there is least left to learn.

This locates the contribution against the debate over the penalty parameter. \citet{mxb2026} give a rigorous justification for Maenhout's value-function scaling, showing that it preserves homotheticity and, in their taxonomy, captures model \emph{ambiguity}---a worst case drawn from a structured, rectangular family of alternatives---as distinct from the richer model \emph{misspecification} they reach by letting nature allocate continuation entropy across states. Proposition~\ref{prop:value-scaled} is the mean--variance counterpart of the homotheticity-preserving side of that picture: value-scaling keeps De~Franco's affine policy intact. But mean--variance has no preference-level homotheticity in wealth for value-scaling to protect, unlike the CRRA setting they study; the $g^2$-homogeneity it preserves belongs to the Zhou--Li embedding, not to the preferences. The two normalizations answer different questions; in our setting only the constant multiplier produces new structure. Under it, the quartic coefficient becomes non-polynomial as the investor learns---though that alone does not break the affine policy form.

Both shape claims---the cubic exit under the constant multiplier, the affine class preserved under value-scaling---rest on closed forms whose verification we have so far only gestured at. The next section audits that status: which bounds hold unconditionally, which need the single gradient assumption, and which stay open.

% =============================
% Section 8: Verification
% =============================
\section{Verification and the Rigor Ledger}
\label{sec:verify}

The Hamiltonian's quadratic structure gives two gap formulas, and from them an asymmetry: the investor's security bound---that the robust cost is at most $V$---needs almost no hypotheses, while the matching lower bound requires a single structural assumption on the exact value. We isolate that assumption, then close with a ledger of what is unconditional, what is conditional, and what is open.

\subsection{The verification identity}
\label{sec:verify-identity}

The Hamiltonian's joint quadratic structure yields two exact gap formulas. Write $\Ham(u, \varphi, V)$ for the \emph{full drift} of \eqref{eq:isaacs} (the bracket plus $V_t + \tfrac{P^2}{2}V_{mm} - P^2 V_P$), evaluated on a solution $V$ of the Isaacs equation. The equation itself then forces $\Ham(u^\star, \varphi^\star, V) = 0$. Completing the square in $(u, \varphi)$ then gives
\begin{equation}
\label{eq:gap}
    \Ham(u^\star, \varphi, V) = -\frac{\theta}{2}(\varphi - \varphi^\star)^2 \le 0, \qquad
    \Ham(u, \varphi^\star, V) = \frac{\sigma^2 V_{xx}}{2}(u - u^\star)^2 \ge 0,
\end{equation}
the second holding with sign $\ge 0$ exactly when $V_{xx} \ge 0$. Integrating along true-martingale paths gives performance-gap formulas in which robust regret equals a curvature-weighted control deviation. We state the bounds in the asymmetric form that minimizes hypotheses.

\subsection{The security bound needs no martingality conditions}
\label{sec:security}

\begin{theorem}[Security bound]
\label{thm:security}
Let $V \ge 0$ be a smooth solution of the Isaacs equation \eqref{eq:isaacs} with $B_\theta > 0$, so that $u^\star$ is well-defined, and suppose the closed-loop state under $u^\star$ is well-posed under each $Q \in \A$. Then, playing $u^\star$ against any $Q \in \A$,
\begin{equation}
\label{eq:security}
    \sup_{Q \in \A} J(u^\star, Q) \le V,
\end{equation}
with $\E_Q G_T^2 \le V + \theta\,\KL < \infty$ as a by-product.
\end{theorem}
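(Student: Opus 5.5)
The plan is a verification argument run under an arbitrary fixed $Q \in \A$, using only the first gap formula in \eqref{eq:gap} and nonnegativity of $V$. No martingality or Novikov-type condition on the feedback is needed.

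Fix $Q \in \A$. By Proposition~\ref{prop:completeness}, $Q = Q^\varphi$ for an adapted $\varphi$ with $\half\E_Q\!\int_t^T \varphi_s^2\,ds = \KL(Q\Vert\Prob) < \infty$, and $\Wh^Q = \Wh - \int\varphi\,ds$ is $Q$-Brownian. Under $Q$ the closed-loop state obeys \eqref{eq:distorted-dynamics} with $u = u^\star$; it is well-posed by hypothesis.

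Apply It\^o's formula to $V(s, X_s, m_s, P_s)$ under $Q$ and add the running penalty. Because $V$ solves \eqref{eq:isaacs} and $u^\star$ is its optimizer, the drift of
\[
    V(s,\cdot) - \frac{\theta}{2}\int_t^s \varphi_r^2\,dr
\]
is exactly $\Ham(u^\star, \varphi_s, V) = -\frac{\theta}{2}(\varphi_s - \varphi^\star_s)^2 \le 0$ by \eqref{eq:gap}. Equivalently, this process is a $Q$-local martingale $\int \Sigma^\star\,d\Wh^Q$, with $\Sigma^\star = \sigma u^\star V_x + PV_m$, plus a nonincreasing drift.

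Localize with stopping times $\tau_n \uparrow T$ that reduce the local martingale, for instance the first exit of the state from a ball of radius $n$. Taking $Q$-expectations gives
\[
    \E_Q\big[V(\tau_n, \cdot)\big] \le V(t,x,m,P) + \frac{\theta}{2}\,\E_Q\!\int_t^{\tau_n}\varphi_r^2\,dr \le V + \theta\,\KL(Q\Vert\Prob).
\]
The step I expect to be the main obstacle is passing $n \to \infty$ without uniform integrability, since the feedback need not generate a true martingale. The remedy is to use $V \ge 0$ together with Fatou's lemma instead of dominated convergence. Continuity of $V$ and of the paths gives $V(\tau_n,\cdot) \to V(T, X_T, m_T, P_T) = G_T^2$ $Q$-a.s., and Fatou yields
\[
    \E_Q G_T^2 \le \liminf_n \E_Q\big[V(\tau_n,\cdot)\big] \le V + \theta\,\KL(Q\Vert\Prob).
\]
The right side is finite because $Q \in \A$, which is the stated by-product. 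Rearranging gives $J(u^\star, Q) = \E_Q G_T^2 - \theta\,\KL \le V$, and taking the supremum over $Q \in \A$ proves \eqref{eq:security}.

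This one-sided direction needs neither $V_{xx} > 0$ nor the second gap formula. The upper bound uses only nature's concavity, which holds unconditionally, and $B_\theta > 0$ enters only to make $u^\star$ well-defined.
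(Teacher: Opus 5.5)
Your proposal is correct and follows the paper's proof: the first gap formula in \eqref{eq:gap} makes $V(s,\cdot)-\tfrac{\theta}{2}\int_t^s\varphi^2$ a $Q$-local supermartingale, and Fatou closes the argument using $V\ge 0$ and the $Q$-integrability of $\int_t^T\varphi^2$ that comes with finite entropy. The only difference is bookkeeping: you localize explicitly and apply Fatou to $V(\tau_n,\cdot)\ge 0$ after bounding the partial penalty by the full one, while the paper applies Fatou directly to the local supermartingale, bounded below by the integrable variable $-\tfrac{\theta}{2}\int_t^T\varphi^2$.
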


\begin{proof}
Along $u^\star$, $\Theta_s := V(s, \cdot) - \tfrac{\theta}{2}\int_t^s \varphi^2$ is a local supermartingale by the first gap formula, bounded below by $-\tfrac{\theta}{2}\int_t^T \varphi^2$, which is $Q$-integrable by definition of the finite-entropy class. Fatou for local supermartingales bounded below by an integrable variable gives the true supermartingale inequality directly.
\end{proof}

The proof needs no Novikov, Kazamaki, or integrand condition, because the entropy penalty self-localizes. Playing $u^\star$ costs the investor at most $V$ against every statistically plausible model; this is his worst-case \emph{certificate}. That certificate requires no exponential-moment criterion at all. It is conditional only on the smooth solution, the hypothesis carried throughout this section. The certificate's \emph{sharpness}, however, requires more hypotheses.

\subsection{The single root}
\label{sec:single-root}

One structural assumption on the exact value supplies everything the lower bound needs. That bound fixes nature's feedback against arbitrary admissible $u$, and it requires three things: $V_{xx} > 0$, a martingality condition on $\mathcal{E}(\int \varphi^\star d\Wh)$, and an integrability bound.

\begin{assumption}[Gradient structure]
\label{assump:G}
The exact value is a classical solution of \eqref{eq:isaacs}, smooth up to $T$, with gradient structure at large $|g|$ ($V \sim R(t,m)\,g^2$, $V_x \sim 2Rg$, $V_m \sim R_m g^2$, $V_{xx} \sim 2R$, $V_{xm} \sim 2R_m g$, uniformly in $t$ and locally uniformly in $m$, with $R$ even in $m$ and $\partial_m \log R$ of linear growth in $m$), and with the moment bounds that make the saddle pair admissible: the $u^\star$-controlled terminal wealth lies in the admissibility class of Section~\ref{sec:value-def}, and $V$ is uniformly integrable along stopped trajectories under each candidate measure.
\end{assumption}

Under Assumption~\ref{assump:G}, the exact adversary feedback $\varphi^\star$ is bounded in $g$ and of linear growth in $m$. This is a cancellation: the $PR_m g^2$ terms in $\sigma u^\star V_x$ and $PV_m$ cancel, leaving $\varphi^\star \to -(m + \tfrac{P}{2}\partial_m \log R)$ as $|g| \to \infty$, with the linear-growth bound supplied by the $\partial_m \log R$ clause. Since the law of $m$ is control-independent and Gaussian, a Bene\v{s}/iterated-Novikov criterion \citep{karatzasshreve1991} closes the martingality uniformly over the investor's actions, and the integrability bound follows in the same class.

\begin{theorem}[Conditional saddle]
\label{thm:saddle}
Under Assumption~\ref{assump:G}, the exact feedback pair is admissible, $(u^\star, \varphi^\star)$ is a saddle, and $V$ equals the game value. (Proof in Appendix~\ref{app:verification}.)
\end{theorem}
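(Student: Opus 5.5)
The plan is a standard two-sided verification built on the gap formulas \eqref{eq:gap}, with Assumption~\ref{assump:G} supplying the integrability that the security bound did not need.

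\textbf{Step 1: the upper bound.} Theorem~\ref{thm:security} already gives $\sup_{Q\in\A} J(u^\star,Q)\le V$. It remains to check that $u^\star$ is admissible in the class of Section~\ref{sec:value-def}, i.e.\ $\E_\Prob[e^{\delta G_T^2}]<\infty$ for some $\delta>1/\theta$. This is exactly the moment clause of Assumption~\ref{assump:G}, so we cite it. Hence $\inf_u J^\theta(u)\le V$.

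\textbf{Step 2: nature's feedback and its measure.} Fix an arbitrary admissible $u$ and let nature play $\varphi^\star(u)=(\sigma u V_x+PV_m)/\theta$ from \eqref{eq:phi-star}, evaluated along the $u$-controlled state. The key reduction is that the induced tilt $\mathcal{E}(\int\varphi^\star d\Wh)$ is a true martingale. Under the gradient structure, we split $\varphi^\star(u)$ as
\[
\varphi^\star(u)=\varphi^\star(u^\star)+\tfrac{\sigma}{\theta}(u-u^\star)V_x .
\]
The first term is bounded in $g$ and of linear growth in $m$, by the cancellation stated after Assumption~\ref{assump:G}. Since $m$ is Gaussian with a control-independent law, Bene\v{s}-type iterated Novikov on a fine partition handles that term. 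The second term is not a priori controlled, and this is where I expect the main obstacle. I would avoid proving martingality for it directly. Instead, localize with stopping times $\tau_n$ at which $\int(\varphi^\star)^2$ or $|G|$ reaches $n$, build measures $Q_n\in\A$ from the stopped tilt, and argue at the level of $Q_n$. Every $Q_n$ then lies in $\A$ automatically, by Proposition~\ref{prop:completeness}.

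\textbf{Step 3: the lower bound.} Under $Q_n$, apply It\^o to
\[
V(s\wedge\tau_n,\cdot)-\tfrac{\theta}{2}\int_t^{s\wedge\tau_n}(\varphi^\star)^2 .
\]
The second gap formula shows its drift is $\tfrac{\sigma^2V_{xx}}{2}(u-u^\star)^2\ge0$, using $V_{xx}>0$ from the assumption. So the process is a local submartingale, and a true one up to $\tau_n$ by localization. This yields
\[
J(u,Q_n)\ \ge\ \E_{Q_n}\!\big[V(\tau_n\wedge T)\big]-\tfrac{\theta}{2}\E_{Q_n}\!\int(\varphi^\star)^2+\ldots\ \ge\ V(t).
\]
For the final passage, when the terminal payoff at $\tau_n<T$ is replaced by $V$, we use the uniform integrability of $V$ along stopped trajectories from Assumption~\ref{assump:G}. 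Letting $n\to\infty$ gives $\sup_Q J(u,Q)\ge V$.

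\textbf{Step 4: conclusion.} Steps 1 and 3 give $J^\theta(u)\ge V\ge J^\theta(u^\star)$, so $V=\inf_u J^\theta(u)$, the game value. Taking $u=u^\star$ in Step 2, the tilt is the true martingale from the bounded part alone, so $Q^{\varphi^\star}\in\A$. Combined with the first gap formula holding with equality at $\varphi^\star$, this shows $(u^\star,\varphi^\star)$ is a saddle. The delicate points are the limit $n\to\infty$ in Step 3 for arbitrary $u$, where the uniform-integrability clause carries the argument, and checking that the truncated measures do not lose the entropy term in the limit, for which monotone convergence suffices.
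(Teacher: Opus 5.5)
There is a genuine gap, and it starts with your choice of nature's strategy in Step 2. You let nature play the best response $\varphi^\star(u)=(\sigma uV_x+PV_m)/\theta$ to the deviating control. The second gap formula in \eqref{eq:gap}, however, holds for nature's \emph{fixed} saddle feedback $\varphi^\star=\varphi^\star(u^\star)$, evaluated along the $u$-controlled state. Along your $\varphi^\star(u)$ the drift is actually $\Ham(u,\varphi^\star(u),V)=\tfrac{\sigma^2B_\theta}{2}(u-u^\star)^2$; the extra $\tfrac{\sigma^2\eps V_x^2}{2}(u-u^\star)^2$ is nature's gain from re-optimizing. So the sign survives, but the resulting inequality $J(u,Q^{\varphi^\star(u)})\ge V$ only yields $\inf_u\sup_Q J\ge V$. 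It says nothing about the investor-side inequality $J(u,Q^{\varphi^\star})\ge J(u^\star,Q^{\varphi^\star})$ with nature's feedback held fixed. That inequality is what ``$(u^\star,\varphi^\star)$ is a saddle'' asserts, and it is exactly where $V_{xx}>0$, rather than $B_\theta>0$, is needed. Your Step 4 infers the saddle from the security bound plus equality at $(u^\star,\varphi^\star)$. That shows only that $\varphi^\star$ is nature's best response to $u^\star$, not that $u^\star$ is the investor's best response to $\varphi^\star$.

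The same choice breaks the limit in Step 3. For a general admissible $u$ the summand $\tfrac{\sigma}{\theta}(u-u^\star)V_x$ is uncontrolled, and localizing does not remove the need for martingality of the full tilt. After $\tau_n$ nature plays $0$, so your estimate really reads $J(u,Q_n)\ge V(t)+\E_{Q_n}\big[(G_T^2-V(\tau_n))\mathbf{1}_{\{\tau_n<T\}}\big]$. This boundary term has no sign, because with no distortion the continuation cost can fall below the robust value $V(\tau_n)$. Killing it needs $Q_n(\tau_n<T)\to0$ together with uniform integrability across the varying family $\{Q_n\}$. The first fails whenever $\mathcal{E}(\int\varphi^\star(u)\,d\Wh)$ is a strict local martingale, since then $Q_n(\tau_n<T)\to1-\E_{\Prob}[\mathcal{E}_T]>0$. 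The second is not available either: Assumption~\ref{assump:G} provides uniform integrability only under each fixed candidate measure. Monotone convergence of the entropy does not touch this term.

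The repair is to drop the second summand and let nature play the bounded part of your own split, $\varphi^\star(u^\star)$, along the $u$-state. This feedback is bounded in $g$ and of linear growth in $m$, uniformly in $u$. The Bene\v{s}/iterated-Novikov criterion then gives a true-martingale density and a single measure $Q^{\varphi^\star}\in\A$ for every $u$. The second gap formula as stated gives drift $\tfrac{\sigma^2V_{xx}}{2}(u-u^\star)^2\ge0$, and uniform integrability under that one measure upgrades the local submartingale. The resulting bound $J(u,Q^{\varphi^\star})\ge V$ delivers the value identity and the saddle together. This is the argument of Appendix~\ref{app:verification}.
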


Everything conditional in the exact theory traces to Assumption~\ref{assump:G}---existence with regularity, the gradient structure, and the admissibility moments of the exact value---and nothing else. (The named Novikov condition for the first-order adversary fails at $P = 0$, which is why verification is conditional and bounds use truncated feedbacks; at $P > 0$ the boundedness of Section~\ref{sec:pathwise} makes the question tractable.)

\subsection{Three tiers}
\label{sec:tiers}

The rigor posture of the whole paper falls into three tiers. \emph{Unconditional:} the filtration lemma; completeness; the embedding (within the exponential-moment class); the value bounds, convexity, and the absence of a constant term in the value expansion ($D \equiv 0$ in the sense of Section~\ref{sec:baseline}); the rational obstruction (over the widened family, with $C^1$-in-$m$ regularity to $T$); the first-order coefficient with closed-form $b$, positivity, the kurtosis identity, and the kernel trap; the universal sign; value-scaled closure with the as-if identity (as statements about the closed-form solution of the value-scaled equation, Section~\ref{sec:maenhout}); and all known-drift results including the fragility theorem. \emph{Conditional on a smooth solution with Assumption~\ref{assump:G}:} the exact verification (the security bound needs less, per Section~\ref{sec:security}), saddle attainment, and the bounded-tail adversary. \emph{Open:} Assumption~\ref{assump:G} itself (the single root); and the two-sided value remainder, with the one-sided bound of Corollary~\ref{cor:onesided} as partial progress. (The Bayesian fragility question at $P > 0$ is \emph{closed} by the pathwise identity: the loss is bounded and the robust cost finite, Section~\ref{sec:fragility}.)

% =============================
% Section 9: Limiting Cases
% =============================
\section{Limiting Cases}
\label{sec:limits}

Six limits recover known cases and confirm the construction. \emph{No robustness} ($\theta \to \infty$): the quartic term $\eps A^2 b\,g^4$, the policy correction $\eps u_1$, and $\varphi^{\star,\eps}$ vanish; De~Franco recovered. \emph{Known drift} ($P = 0$): $b \to \half(e^{4\rho^2(T-t)} - 1)$, $A^2 b = \sinh(2\rho^2(T-t))$, and the fragility theorem applies. \emph{At target} ($g = 0$): zero value at every order. \emph{Terminal time} ($t \to T$): $\Psi \to 1$, $b \to 0$ at rate $2k^2 m^2(T-t)$, terminal condition recovered. \emph{Zero current estimate} ($m = 0$, $P > 0$): no current position but $V_1 = \half A^2 g^4(e^{L_0} - 1) > 0$, the pure future-learning-ambiguity premium. \emph{Diffuse prior} ($P_0 \to \infty$): $k_t \to T/(2T - t) \in [\tfrac12, 1]$ and the correction objects $V_1 = A^2 b\,g^4$ and $\eps u_1$ inherit finite limits ($b$ alone grows like $\tfrac12\sqrt{P_0 T}$ at $t = 0$, absorbed by $A^2$); the corrections are uniformly well-behaved in the prior.

% =============================
% Section 10: Numerical Illustration
% =============================
\section{Numerical Illustration}
\label{sec:numerics}

Figure~\ref{fig:premium} shows the premium and its horizon law, Figure~\ref{fig:policy} the cubic retreat and its boundary, and Figure~\ref{fig:benchmark} the benchmark of the expansion. All curves come from the closed forms of Sections~\ref{sec:expansion}--\ref{sec:policy}; the only PDE solver appears in Figure~\ref{fig:benchmark}, to benchmark them.

\begin{figure}[t]
\centering
\includegraphics[width=\textwidth]{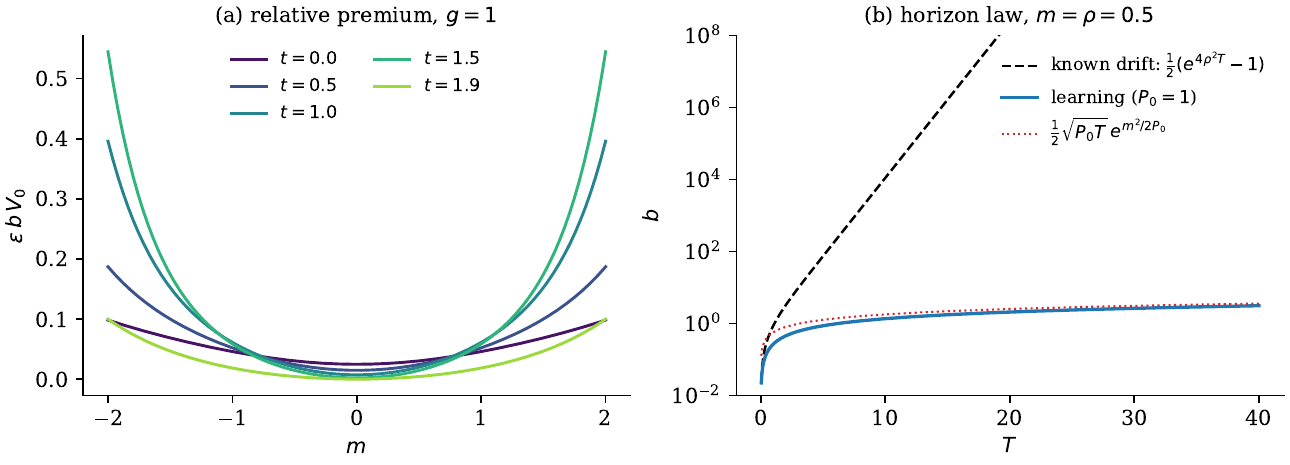}
\caption{\textbf{The premium and the slowing of the horizon.} (a) Relative first-order premium $\eps\,b\,V_0$ at $g = 1$, $\eps = 0.1$, $P_0 = 1$, $T = 2$: the premium grows with conviction $|m|$ and dies at the horizon. (b) The coefficient $b$ against horizon at $m = \rho = 0.5$ (log scale): with known drift, $b = \tfrac12(e^{4\rho^2 T} - 1)$ grows exponentially; under learning ($P_0 = 1$) the growth is $\sqrt{T}$, with the asymptote $\tfrac12\sqrt{P_0 T}\,e^{m^2/2P_0}$ of \eqref{eq:sqrtT} overlaid. Learning slows the horizon explosion.}
\label{fig:premium}
\end{figure}

\begin{figure}[t]
\centering
\includegraphics[width=0.62\textwidth]{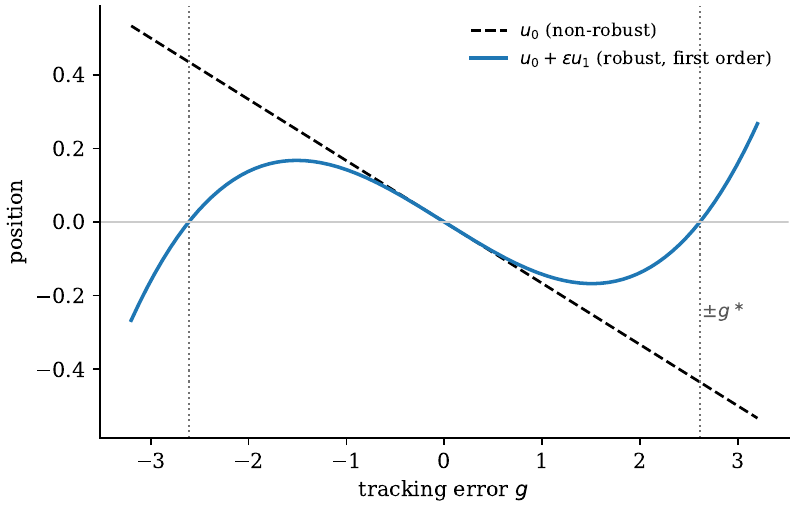}
\caption{\textbf{The cubic retreat.} Non-robust policy $u_0$ and first-order robust policy $u_0 + \eps u_1$ at $t = 0$, $T = 1$, $P_0 = 1$, $m = 0.5$, $\eps = 0.15$, $\sigma = 1$. The correction opposes the position everywhere (Theorem~\ref{thm:sign}) and grows cubically. The dotted lines mark the radius $\eps g^2 = h/(2Ak\Psi(1 - \xi_0))$ where the first-order policy crosses zero, the boundary of the expansion's validity, beyond which the truncated policy is used (Section~\ref{sec:cubic}).}
\label{fig:policy}
\end{figure}

\begin{figure}[t]
\centering
\includegraphics[width=\textwidth]{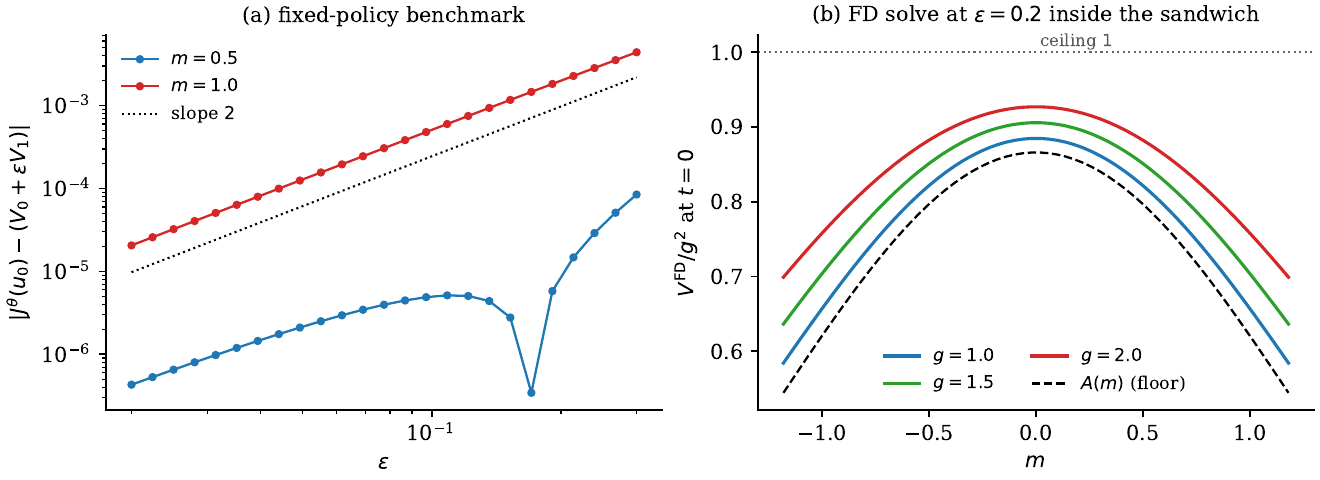}
\caption{\textbf{Benchmarking the expansion.} (a) The fixed-policy robust cost $J^\theta(u_0)$ at $P > 0$, computed \emph{exactly} by one-dimensional Gaussian quadrature through the pathwise identity \eqref{eq:pathwise}, against $V_0 + \eps V_1$ ($t = 0$, $T = 1$, $P_0 = 1$, $g = 1$): the deviation scales as $\eps^2$ (slope-2 reference; empirical exponents $1.99$--$2.00$ for $m = 1$), the rigorous expansion of Corollary~\ref{cor:onesided} in action. For $m = 0.5$ the signed deviation crosses zero near $\eps \approx 0.15$, which the absolute value renders as a cusp and which depresses the measured slope as the crossing is approached. (b) Finite-difference solution of the optimized equation \eqref{eq:reduced-pde} at $\eps = 0.2$ (upwind scheme on the plugged-control form, first-order Dirichlet data on a padded cylinder; the sacrificial padding layer is excluded from display and diagnostics): the displayed profiles $V^{\mathrm{FD}}/g^2$ lie strictly inside the unconditional sandwich \eqref{eq:sandwich}, $A(m) \le V/g^2 \le 1$ (dashed floor and dotted ceiling), and $B_\theta = V_{gg} + \eps V_g^2$ remained positive at every displayed grid point and time step ($\min B_\theta = 0.74$), numerical support for the standing curvature condition of Section~\ref{sec:isaacs-eq} and, indirectly, Assumption~\ref{assump:G}.}
\label{fig:benchmark}
\end{figure}

Across all three figures the closed forms track the exact quadrature and the finite-difference solver, the deviation collapses at the predicted $\eps^2$ rate, and the computed value stays inside the unconditional sandwich with $B_\theta$ positive throughout: the numerical counterpart of the conditional verification.

% =============================
% Section 11: Conclusion
% =============================
\section{Conclusion}
\label{sec:conclusion}

A learning mean--variance investor has wealth and beliefs that move on a single innovation, and the worst-case misspecification acts through that shared channel. Writing the robust value as the risk-sensitive image of the De~Franco problem, we showed that its dynamic-programming equation leaves De~Franco's closed-form class for every prior, the known-drift case included. The price of robustness is itself closed form: to first order it is half the variance of the realized loss (equivalently, a kurtosis-type ratio of the tracking error), and learning slows its long-horizon growth from exponential to $\sqrt{T}$.

The policy retreats from large positions by a cubic correction. With a known drift the cubic correction is forced, since the un-retreated policy is infinitely fragile; under learning, the shared channel caps the loss pathwise and that fragility disappears. This shape change is produced by the \emph{constant} multiplier. Value-scaling leaves De~Franco's affine world untouched. Value-scaled robustness reduces to De~Franco on a shrunk signal; constant-multiplier robustness is equivalent to nothing in the class. One structural assumption carries every conditional statement about the exact game; the matching two-sided expansion bound remains open.

The companion question of action-space regularization, treated in \citet{au2026}, has the opposite answer. There the penalty applies to the action distribution and is orthogonal to learning. Together the two papers delimit where robustness in a learning portfolio problem is cosmetic and where it is structural. Robustness is structural if and only if the penalty applies to the law of the model rather than the distribution of the act.

\paragraph{Open problems.} Assumption~\ref{assump:G}, the gradient and tail structure of the exact value, is the single root; discharging it would make the verification unconditional. The two-sided value remainder is open, with a one-sided bound in hand. The multi-asset extension, where a correlation structure feeds the inverse covariance and is only imperfectly observable, would test whether the drift-only scope of the ambiguity survives contact with a high-dimensional signal.

% =============================
% Acknowledgments and References
% =============================

\bibliographystyle{plainnat}
\bibliography{references}

% =============================
% Appendices
% =============================
\newpage
\appendix

These appendices collect the computations and proofs deferred from the main text: the De~Franco Riccati closed form (Appendix~\ref{app:riccati}) and the normalization reducing the first-order PDE to the $b$-equation (Appendix~\ref{app:normalization}); the first-order control correction (Appendix~\ref{app:control}); the detailed conditional verification proof (Appendix~\ref{app:verification}); the known-drift singular limit (Appendix~\ref{app:known-drift}); and the mean--variance/KL embedding (Appendix~\ref{app:embedding}). Appendix~\ref{app:notation} is a notation table.

\section{The De~Franco Riccati System}
\label{sec:closed-form}
\label{app:riccati}

This is the one-dimensional, Sharpe-ratio specialization of De~Franco's solution: \citet{defranco2018} treat $n$ risky assets with a known volatility matrix and parametrize the uncertainty by the drift vector, whereas we take a single risky asset and work in the Sharpe ratio $\rho$ (Section~\ref{sec:market}), so $m$ and $P$ are the posterior mean and variance of $\rho$. Under that identification their closed form reduces to what follows.

Since $P_t = \frac{P_0}{1 + P_0 t}$ is deterministic, we work along the characteristic $\{P = P_t\}$ and write $A(t,m)$ for $A(t, m, P_t)$. The exponential substitution $A = e^{\alpha m^2 + \gamma}$ reduces the $A$-equation in \eqref{eq:reduced-pde} at $\eps = 0$ to coefficient matching:
\begin{equation}
\label{eq:riccati-odes}
    \alpha' = 1 + 4P_t\alpha + 2P_t^2\alpha^2, \quad \alpha(T) = 0; \qquad
    \gamma' = -P_t^2\alpha, \quad \gamma(T) = 0.
\end{equation}
The Riccati equation for $\alpha$ is solved, via the substitution $r = P\alpha$, by
\begin{equation}
\label{eq:alpha-gamma}
\begin{aligned}
    \alpha(t) &= -\frac{(1 + P_0 t)(T - t)}{1 + P_0(2T - t)} = -\frac{T - t}{1 + 2P_t(T - t)}, \\[4pt]
    \gamma(t) &= \frac12\log\frac{(1 + P_0 t)\big(1 + P_0(2T - t)\big)}{(1 + P_0 T)^2}.
\end{aligned}
\end{equation}
In the dimensionless variable $z = P_t(T - t)$, the shorthand of the main text reads $r = P\alpha = -z/(1+2z) \in (-\tfrac12, 0]$, $h = 1 + 2r = 1/(1+2z)$, and $k = 1 + r = (1+z)/(1+2z)$.

\section{Normalization $a_4 = A^2 b$}
\label{app:normalization}

Writing $V_1 = a_4 g^4$ and substituting into the first-order equation \eqref{eq:V1-pde} gives the linear equation
\begin{equation}
\label{eq:a4-eq}
    a_{4,t} + \tfrac{P^2}{2}a_{4,mm} - P^2 a_{4,P} - 4hPm\,a_{4,m} + (6h^2 - 4h)m^2\,a_4 + 2k^2 m^2 A^2 = 0,
\end{equation}
with terminal condition $a_4(T) = 0$;
the normalization $a_4 = A^2 b$ then absorbs the baseline dynamics (the $m$-independent zeroth-order coefficient cancels by $\gamma' = -P^2\alpha$, and $\alpha_t$ is eliminated through \eqref{eq:riccati-odes}). The critical coefficient is the $m^2 b$ term, which with $r = P\alpha$, $h = 1 + 2r$, $k = 1 + r$ assembles as
\begin{equation}
\label{eq:coeff-check}
    8h^2 + 4r^2 - 4h(1 + 4r) = 8(1+2r)^2 + 4r^2 - 4(1+2r)(1+4r) = 4(1+r)^2 = 4k^2,
\end{equation}
and the $b_m$ coefficient as $4Pm(P\alpha - h) = -4Pkm$, yielding \eqref{eq:b-pde}.

\section{First-Order Control Correction}
\label{app:control}

Apply the feedback map $-\frac{C_\theta}{\sigma B_\theta}$ to the two-term expansion $V_0 + \eps V_1$ and collect orders. The term $u_1$ below is \emph{defined} as the resulting first-order term (its $O(\eps^2)$ proximity to the exact optimal feedback would require derivative-level control of the open value remainder, and isn't claimed). The zeroth-order objects are $B_0 = V_{0,xx} = 2A$ and $C_0 = 2\Gamma g$ with $\Gamma = mA + PA_m$, and the first-order pieces are
\[
    C_1 = m V_{1,x} + P V_{1,xm} + P V_{0,x} V_{0,m}, \qquad B_1 = V_{1,xx} + V_{0,x}^2
\]
(note $V_{0,x}^2 = 4A^2 g^2$). Then $u_1 = -\frac{C_1 B_0 - C_0 B_1}{\sigma B_0^2}$, and the cancellations $1 + 4r - 3h = -2k$, $r - h = -k$ reduce it to
\begin{equation}
\label{eq:u1-derivation}
    u_1 = \frac{2Ag^3}{\sigma}\big[km(2b+1) - Pb_m\big].
\end{equation}
Applying the closed form $2b + 1 = \Psi = e^{R_0 m^2 + L_0}$ and $b_m = R_0 m\Psi$ reduces the bracket to $m\Psi(k - PR_0)$, and the kernel identity $k - PR_0 = k(1 - \xi_0) = k\,\frac{P_T}{4P_t - 3P_T} > 0$ gives \eqref{eq:u1} and the sign property $u_0 u_1 \le 0$.

\section{Conditional Verification: Detailed Proof}
\label{app:verification}

The gap identities \eqref{eq:gap} follow from completing the square in the Hamiltonian. For the upper bound, fix $u^\star$ and arbitrary $Q \in \A$; It\^o on $V$ under $Q$ gives
\[
    d\Big(V(s, X_s, m_s, P_s) - \int_t^s \tfrac{\theta}{2}\varphi_r^2\,dr\Big) = \Ham(u^\star, \varphi, V)\,ds + (\sigma u^\star V_x + P V_m)\,d\Wh^Q_s,
\]
and the first gap inequality plus the Fatou argument of Theorem~\ref{thm:security} yields $\sup_Q J(u^\star, Q) \le V$. For the lower bound, fix $\varphi^\star$ and arbitrary admissible $u$; the second gap identity gives $\Ham(u, \varphi^\star, V) \ge 0$ when $V_{xx} > 0$, so $\Theta$ is a local \emph{sub}martingale under $Q^{\varphi^\star}$, where Fatou goes the wrong way, and the passage to a true submartingale uses the uniform-integrability clause of Assumption~\ref{assump:G}; the martingality of $\mathcal{E}(\int \varphi^\star d\Wh)$, defining $Q^{\varphi^\star} \in \A$, holds uniformly in $u$ by the Bene\v{s} criterion (the feedback is bounded in $g$ and of linear growth in $m$, with $m$'s law control-independent). This yields $\inf_u \sup_Q J(u, Q) \ge V$; combining gives equality.

\section{Known-Drift Limit}
\label{app:known-drift}

At $P = 0$, $m_t \equiv \rho$ and $A(t) = e^{-\rho^2(T-t)}$. The $a_4$-equation reduces to $a_4' + 2\rho^2 a_4 + 2\rho^2 A^2 = 0$, $a_4(T) = 0$, with integrating-factor solution
\begin{equation}
\label{eq:knowndrift}
    a_4(t) = 2\rho^2\int_t^T e^{-2\rho^2(T-s)}e^{2\rho^2(s-t)}\,ds = \sinh\!\big(2\rho^2(T-t)\big) > 0,
\end{equation}
and in $b$-variables $b(t) = \half(e^{4\rho^2(T-t)} - 1)$, matching the singular limit of \eqref{eq:b-closed}.

\section{The Embedding Proposition}
\label{app:embedding}

We verify the embedding promised in Section~\ref{sec:setup}: that the relative-entropy-robust mean--variance objective is the saddle point of the penalized criterion $\E_Q[(X_T-w)^2] - \theta\,\KL(Q\Vert\Prob)$, which licenses the static-duality reformulation used there. The strict gap $\delta > 1/\theta$ does the analytic work, giving $C^2$ regularity and a finite-entropy Gibbs measure that attains the supremum; strict convexity and coercivity follow from $F_u'' \ge 2$.

For fixed admissible $u$ in the exponential-moment class $\{\E_\Prob[e^{\delta X_T^2}] < \infty$ for some $\delta > 1/\theta\}$, the Gibbs principle \eqref{eq:gibbs} gives $F_u(w) = \theta\log\E_\Prob e^{(X_T - w)^2/\theta}$. Finiteness for every $w$ follows from $(X_T - w)^2 \le (1+\eta)X_T^2 + (1 + \eta^{-1})w^2$ with $\eta = \delta\theta - 1 > 0$; the same bound dominates the first two $w$-derivatives locally uniformly, so $F_u \in C^2$; and the strict gap $\delta > 1/\theta$ gives $\E_\Prob[(X_T - w)^2 e^{(X_T - w)^2/\theta}] < \infty$, so the Gibbs measure has finite entropy and attains the supremum. Differentiating, $F_u'(w) = -2\E_{Q^\star_w}[X_T - w]$ with $Q^\star_w$ the Gibbs measure, and $\partial_w \E_{Q^\star_w}[X_T] = -\tfrac{2}{\theta}\Var_{Q^\star_w}(X_T)$, so
\begin{equation}
\label{eq:Fpp}
    F_u''(w) = 2 + \frac{4}{\theta}\Var_{Q^\star_w}(X_T) \ge 2,
\end{equation}
giving strict convexity and coercivity, hence a unique minimizer $w^\star$ with $w^\star = \E_{Q^\star}[X_T]$, $Q^\star := Q^\star_{w^\star}$. The pair $(Q^\star, w^\star)$ is a saddle of $\Phi(Q, w) = \E_Q[(X_T - w)^2] - \theta\KL(Q \Vert \Prob)$: $\Phi(Q, w^\star) \le \Phi(Q^\star, w^\star)$ by the Gibbs maximizer, and $\Phi(Q^\star, w) \ge \Phi(Q^\star, w^\star)$ since a fixed-$Q$ quadratic in $w$ is minimized at its mean. A finite second moment for every $Q \in \A$ follows from the Fenchel--Young inequality $\E_Q[\delta X_T^2] \le \KL(Q \Vert \Prob) + \log\E_\Prob e^{\delta X_T^2}$, so the variance functional is well-defined on the whole adversary class. The criterion form follows from the completion $(X - w)^2 - 2\lambda X = (X - w - \lambda)^2 - 2\lambda w - \lambda^2$.

\section{Notation}
\label{app:notation}

\begin{center}
\renewcommand{\arraystretch}{1.3}
\begin{tabular}{p{2.1cm} p{11cm}}
\toprule
\multicolumn{2}{l}{\textbf{Market and filtering}} \\
\midrule
$\rho, m_t, P_t$ & Sharpe ratio (unknown), posterior mean, posterior variance \\
$\Wh_t$ & Innovation Brownian motion; $\F^{\Wh} = \F^Y$ \\
$X_t, u_t, g$ & Discounted wealth, position, tracking error $g = x - w$ \\
$w$ & Zhou--Li target / Lagrange multiplier \\
\midrule
\multicolumn{2}{l}{\textbf{Robustness}} \\
\midrule
$\theta, \eps$ & Hansen--Sargent multiplier, expansion parameter $\eps = 1/\theta$ \\
$\bar\theta$ & Value-scaled (Maenhout) multiplier \\
$\varphi, Q^\varphi$ & Adversary's innovation drift, distorted law \\
$\Sigma$ & Innovation sensitivity $\sigma u V_x + P V_m$ \\
$B_\theta, C_\theta$ & $V_{xx} + \eps V_x^2$; $m V_x + P V_{xm} + \eps P V_x V_m$ \\
\midrule
\multicolumn{2}{l}{\textbf{Coefficients}} \\
\midrule
$A, \alpha, \gamma$ & De~Franco value coefficient $A = e^{\alpha m^2 + \gamma}$ \\
$r, h, k$ & $r = P\alpha$ (the risk-free rate $r$ appears only in \S\ref{sec:market}), $h = 1 + 2r$, $k = 1 + r$ \\
$z$ & Dimensionless belief variable $z = P_t(T-t)$ \\
$b, \Psi, a_4$ & Normalized quartic coefficient; $\Psi = 2b + 1 = e^{R_0 m^2 + L_0}$; $a_4 = A^2 b$ \\
$R_0, L_0$ & Kernel exponents; $\Delta_t = 2P_t - P_T$, $\xi = \Delta R_0$ (Riccati variable), $\xi_0 = \Delta_t R_0(t)$ \\
$\bar s, \bar c$ & Value-scaled constants $\bar s = 1 + 2\bar\theta$, $\bar c = 2(1 + \bar\theta)$ \\
\midrule
\multicolumn{2}{l}{\textbf{Objects of Sections 3--8}} \\
\midrule
$\A$ & Adversary class $\{Q \ll \Prob$ on $\F_T^Y : \KL(Q \Vert \Prob) < \infty\}$ \\
$J(u, Q),\ J^\theta(u)$ & Payoff $\E_Q[(X_T - w)^2] - \theta\KL$; fixed-policy robust cost $\sup_{Q} J(u, Q)$ \\
$G_t$ (\S5 ff.) & Tracking-error process $X_t - w$ \\
$\widetilde{m}, \Gamma$ & Auxiliary tilted belief process (\S6.2); $\Gamma = mA + PA_m$ (App.~C) \\
$R(t, m)$ & Tail coefficient of the exact value (Assumption~\ref{assump:G}); $V^\eps \equiv V^\theta$, $\eps = 1/\theta$ \\
\bottomrule
\end{tabular}
\end{center}

\end{document}